\def\@seccntformat#1{%
	\protect\textup{\protect\@secnumfont
		\ifnum\pdfstrcmp{subsection}{#1}=0 \bfseries\fi% subsection # in \bfseries
		\ifnum\pdfstrcmp{subsubsection}{#1}=0 \itshape\fi%subsubsection # in \itshape
		\csname the#1\endcsname
		\protect\@secnumpunct
	}%
}
\renewcommand{\@upn}{}
\DeclareRobustCommand{\crefnosort}[1]{%
	\begingroup\@cref@sortfalse\cref{#1}\endgroup
}
\newcommand{\EE}{\mathbb{E}}
\newcommand{\CC}{\mathbb{C}}
\newcommand{\PP}{\mathbb{P}}
\newcommand{\RR}{\mathbb{R}}
\newcommand{\ii}{\mathrm{i}}
\newcommand{\eul}{\mathrm{e}}
\newcommand{\sym}{\mathrm{sym}}
\newcommand{\Id}{\mathrm{d}}
\newcommand{\LO}{\mathcal{B}}
\newcommand{\dom}{\mathcal{D}}
\newcommand{\Fock}{\mathcal{F}}
\newcommand{\UV}{\Lambda}
\newcommand{\ad}{a^\dagger}
\newcommand{\id}{\mathbbm{1}}
\newcommand{\expv}[1]{\epsilon(#1)}
\newcommand{\p}{\mathrm{p}}
\newcommand{\bos}{\mathrm{b}}
\newcommand{\ren}{\mathrm{ren}}
\newcommand{\vp}{\varphi}
\newcommand{\wt}[1]{\widetilde{#1}}
\newcommand{\wh}[1]{\widehat{#1}}  
\newcommand{\fr}[1]{\mathfrak{#1}}
\newcommand{\mc}[1]{\mathcal{#1}}
\newcommand{\scr}[1]{\mathscr{#1}}
\renewcommand{\Im}{\mathrm{Im}}
\renewcommand{\Re}{\mathrm{Re}}
\renewcommand{\le}{\leqslant}
\renewcommand{\ge}{\geqslant}
\theoremstyle{plain}
\newtheorem{thm}{Theorem}[section]
\newtheorem{lem}[thm]{Lemma}
\newtheorem{cor}[thm]{Corollary}
\newtheorem{prop}[thm]{Proposition}
\theoremstyle{definition}
\theoremstyle{remark}
\newtheorem{rem}[thm]{Remark}
\numberwithin{equation}{section}
\crefname{equation}{}{}
\Crefname{equation}{}{}
\crefname{enumi}{}{}
\Crefname{enumi}{}{}
\crefname{lem}{Lemma}{Lemmas}
\Crefname{lem}{Lemma}{Lemmas}
\crefname{thm}{Theorem}{Theorems}
\Crefname{thm}{Theorem}{Theorems}
\crefname{prop}{Proposition}{Propositions}
\Crefname{prop}{Proposition}{Propositions}
\crefname{defn}{Definition}{Definitions}
\Crefname{defn}{Definition}{Definitions}
\title[Feynman--Kac formula for the relativistic Nelson model]{Feynman--Kac formula for fiber hamiltonians in the relativistic Nelson model in two spatial dimensions}
\author{Benjamin Hinrichs}
\address{Benjamin Hinrichs, Universit\"at Paderborn, Institut f\"ur Mathematik, Institut f\"ur Photonische Quantensysteme, Warburger Str. 100, 33098 Paderborn, Germany}
\email{benjamin.hinrichs@math.upb.de}
\author{Oliver Matte}
\address{Oliver Matte, Aalborg Universitet, Institut for Matematiske Fag, Skjernvej 4a, 9220 Aalborg, Denmark}
\email{oliver@math.aau.dk}
\begin{document}

\begin{abstract} 
	\noindent  
	In this proceeding we consider a translation invariant Nelson type model in two spatial
	dimensions modeling a scalar relativistic particle in interaction with a massive radiation field.
	As is well-known, the corresponding Hamiltonian can be defined with the help of an energy renormalization.
	First, we review a Feynman--Kac formula for the semigroup generated by this Hamiltonian proven by the
	authors in a recent preprint (where several matter particles and exterior potentials are treated as well).
	After that, we employ a few technical key relations and estimates obtained in our preprint to present an otherwise
	self-contained derivation of new Feynman--Kac formulas
	for the fiber Hamiltonians attached to fixed total momenta of the translation invariant system. We conclude by
	inferring an alternative derivation of the Feynman--Kac formula for the full translation invariant Hamiltonian.
\end{abstract}

\maketitle

\section{Introduction}

\noindent
The original Nelson model describes a conserved number of non-relativistic quantum mechanical matter
particles linearly coupled to a quantized radiation field (boson field). 
Its crucial feature is its comparatively simple renormalizability.
In fact, the heuristic matter-radiation interaction term in the Hamiltonian is {\it a priori} ill-defined, as its behavior for
large boson momenta is too singular. Imposing an ultraviolet cutoff in the interaction term 
and adding explicitly given cutoff dependent
renormalization energies, we obtain, however, a well-defined family of Hamiltonians converging in the norm resolvent
sense to a unique {\em renormalized} Hamiltonian as the cutoff parameter goes to infinity. This has been demonstrated by
Nelson almost sixty years ago \cite{Nelson.1964,Nelson.1964c} and in later technical improvements by several authors
(norm instead of strong resolvent convergence has been observed first by Ammari \cite{Ammari.2000}).
Ever since the spectral and probabilistic analysis of Nelson's model and variants thereof has been a popular topic
in mathematical quantum field theory.

A modification of Nelson's model, where the (scalar) matter particles are relativistic as well, has already been
studied in the beginning of the 1970's. Working in spatial dimension three, as Nelson did, Gross
\cite{Gross.1973} was able to prove the existence of renormalized Hamiltonians by
procedures more elaborate than Nelson's involving implicit particle mass renormalizations, a passage
to a non-Fock representation and compactness arguments; whether Gross' renormalized Hamiltonian
is unique is still unclear. Sloan \cite{Sloan.1974} treated the relativistic version of Nelson's model in spatial dimension two
and was able to prove resolvent convergence, in the strong sense and along a subsequence of 
a given sequence of cutoff parameters at least. A few years ago only, Schmidt \cite{Schmidt.2019} 
provided a new construction of Sloan's renormalized Hamiltonian. Employing the recently developed 
method of interior boundary conditions (see \cite{LampartSchmidt.2019} and the references therein), 
Schmidt proved proper norm resolvent convergence and 
obtained explicit formulas for the domain of the renormalized Hamiltonian and its action on it.

In our work we are interested in the probabilistic analysis of Nelson type models and in particular in deriving
Feynman--Kac formulas for the semigroups generated by the (semibounded) renormalized Hamiltonians.
While Nelson obtained probabilistic representations of certain matrix elements of the unitary group
\cite{Nelson.1964c}, Feynman--Kac formulas for the semigroup in the original Nelson model were
found in \cite{GubinelliHiroshimaLorinczi.2014,MatteMoller.2018}. The methods used in \cite{MatteMoller.2018}
also apply {\em mutatis mutandis} to the Hamiltonian constructed by Sloan and Schmidt, that we refer to as the
relativistic Nelson Hamiltonian in two spatial dimensions. For the $N$ matter particle version of that
Hamiltonian including exterior potentials, the present authors proved Feynman--Kac formulas in \cite{HinrichsMatte.2023}.
For earlier work on Feynman--Kac formulas for ultraviolet regularized Nelson type and related models
(such as the Pauli-Fierz model) and numerous applications we refer to the textbook 
\cite{HiroshimaLorinczi.2020} and the references given there.

In this proceeding we discuss the translation invariant (no external potential) relativistic Nelson Hamiltonian 
for one matter particle in two space dimensions. This Hamiltonian is unitarily equivalent to a direct integral
of fiber Hamiltonians, each attached to a fixed total momentum of the entire matter-radiation system.
After reviewing the Feynman--Kac formula for the full Hamiltonian from \cite{HinrichsMatte.2023}, we shall derive
Feynman--Kac formulas for the fiber Hamiltonians, employing only a few key estimates and relations from
\cite{HinrichsMatte.2023} as starting points. It would also be possible to explicitly fiber-decompose the
probabilistic side of the Feynman--Kac formula for the full Hamiltonian and spend a little bit of work to
argue that the so-obtained families of operators define a semigroup for every fixed total momentum, 
that must be generated by a corresponding renormalized fiber Hamiltonian; compare, e.g.,
\cite[Chapter~7]{MatteMoller.2018} for the non-relativistic case. Here we favor, however,
the more independent derivation only based on the technical key inputs from \cite{HinrichsMatte.2023}.
For in this way, the reader can see proof strategies from \cite{HinrichsMatte.2023} at work in a 
slightly different setting. 

\subsection*{Structure of the proceeding}

\noindent
After using the remaining part of this introduction to
clarify our notation for operators in bosonic Fock space,
we briefly explain the construction of the Hamiltonian $H$ for the translation invariant relativistic Nelson 
model in two spatial dimensions in \cref{sec:Ham}. In \cref{sec:FKH} we introduce some stochastic processes
employed throughout the proceeding and present Feynman--Kac formulas found in \cite{HinrichsMatte.2023}
for the semigroups generated by $H$ and its versions $H_{\UV}$ containing ultraviolet cutoff interaction terms. 
By means of a Lee-Low-Pines transformation, we shall turn $H_{\UV}$ into
a direct integral of fiber Hamiltonians $\wh{H}_{\UV}(\xi)$ attached to total momenta $\xi\in\RR^2$ of the 
matter-radiation system in \cref{sec:LLP}.
The objective of \cref{sec:keys} is to present the crucial technical ingredients
from \cite{HinrichsMatte.2023} applied in the remaining parts of the text, which otherwise are fairly self-contained.
Our derivation of Feynman--Kac formulas for fiber Hamiltonians starts in \cref{sec:FKint},
where the corresponding Feynman--Kac integrands and semigroups are analyzed first.
The Feynman--Kac formulas themselves are established in \cref{sec:FKxi}, first for the ultraviolet regularized
fiber operators $\wh{H}_{\UV}(\xi)$ and afterwards for their renormalized versions, i.e.,
the norm resolvent limits $\wh{H}(\xi)\coloneq\lim_{\UV\to\infty }\wh{H}_{\UV}(\xi)$. 
In fact, as a byproduct of our method, we shall obtain an independent proof for the existence of these limits,
improving on \cite{Sloan.1974} where only strong resolvent convergence along subsequences is proven.
(While Schmidt treated the full Hamiltonian explicitly in \cite{Schmidt.2019}, norm resolvent convergence of fiber Hamiltonians
can probably be inferred from his results, too, see \cite{DamHinrichs.2021} for an approach along these lines.)
Clearly as expected, it turns out that $H$ is the direct integral of the renormalized fiber operators $\wh{H}(\xi)$
after a Lee-Low-Pines transformation. This is verified in \cref{sec:renrev}, where we also present an
alternative derivation of the Feynman--Kac formulas for the full translation invariant Hamiltonians, based on the ones for
fiber operators.

\subsection*{Operators in bosonic Fock space}

\noindent
All fiber Hamiltonians alluded to above act in the bosonic Fock space over $L^2(\RR^2)$ defined by
\begin{align*}
\Fock&\coloneq \CC\oplus\bigoplus_{n=1}^\infty L^2_{\sym}(\RR^{2n}).
\end{align*}
In the above relation $L^2_{\sym}(\RR^{2n})$ is the closed subspace of $L^2(\RR^{2n})$ comprising
all functions $\phi_n\in L^2(\RR^{2n})$ satisfying $\phi_n(k_1,\ldots,k_n)=\phi_n(k_{\pi(1)},\ldots,k_{\pi(n)})$
a.e. for every permutation $\pi$ of $\{1,\ldots,n\}$; here $k_j\in\RR^2$ for every $j\in\{1,\ldots,n\}$. 

Throughout this proceeding we use standard notation for the following operators acting in $\Fock$
(see, e.g., \cite{Arai.2018,Parthasarathy.1992} for their constructions and basic properties):

For every $f\in L^2(\RR^2)$, the symbols $\ad(f)$ and $\vp(f)$ denote the
corresponding creation and field operators, respectively. 
Thus, $\vp(f)$ is selfadjoint and equal to the closure of
$\ad(f)+\ad(f)^*$. If $V$ is a unitary operator on $L^2(\RR^2)$, then $\Gamma(V)$ denotes its second
quantization, which is a unitary operator on $\Fock$. We shall sometimes use that
$\Gamma(V_1)\Gamma(V_2)=\Gamma(V_1V_2)$ for  unitary operators $V_1$ and $V_2$ on $L^2(\RR^2)$.
Finally, if $A$ is a selfadjoint operator on $L^2(\RR^2)$, then $\Id\Gamma(A)$ denotes its differential
second quantization. That is, $\Id\Gamma(A)$ is the unique selfadjoint operator on $\Fock$
satisfying $\eul^{-\ii t\Id\Gamma(A)}=\Gamma(\eul^{-\ii tA})$, $t\in\RR$.

\section{The relativistic Nelson model in two spatial dimensions}\label{sec:Ham}

\noindent
In this section we first introduce the Hamiltonian $H_{\UV}$ for the total particle-radiation system
with an ultraviolet cutoff interaction term and finally the renormalized Hamiltonian $H$. Both are
selfadjoint operators in the Hilbert space $L^2(\RR^2,\Fock)$.

The matter particle is assumed to have a non-negative mass $m_{\p}\ge0$ and dispersion relation
\begin{align}\label{defpsi}
\psi(\xi)&\coloneq(|\xi|^2+m_{\p}^2)^{1/2}-m_{\p},\quad\xi\in\RR^2.
\end{align}
Since the model would be unstable otherwise, the bosons have a strictly positive mass $m_{\bos}>0$.
The dispersion relation for a single boson is
\begin{align}\label{defomega}
\omega(k)&\coloneq (|k|^2+m_{\bos}^2)^{1/2},\quad k\in\RR^2.
\end{align}
The coupling function for the matter-radiation interaction is given by
\begin{align}\label{defv}
v&\coloneq g\omega^{-1/2}\notin L^2(\RR^2),\quad\text{with a coupling constant $g\in\RR\setminus\{0\}$.}
\end{align}

Since $v$ is not square-integrable, an energy renormalization will be necessary to define the
Hamiltonian $H$ for our model. That is, we first introduce Hamiltonians containing the ultraviolet
cutoff coupling functions
\begin{align*}
v_{\UV}\coloneq \chi_{B_{\UV}}v\in L^2(\RR^2),\quad \UV\in[0,\infty).
\end{align*}
Here $\chi_{B_{\UV}}$ is the indicator function of the two-dimensional open ball of radius $\UV$ about 
the origin $B_{\UV}$, with the understanding that $B_0=\emptyset$.
Abbreviating
\begin{align*}
e_x(k)&\coloneq \eul^{-\ii k\cdot x},\quad k\in\RR^2,
\end{align*}
for every $x\in\RR^2$, and introducing renormalization energies
\begin{align}\label{defEren}
E_{\UV}^{\ren}&\coloneq \int_{B_{\UV}}\frac{v^2(k)}{\omega(k)+\psi(k)}\Id k,\quad \UV\in[0,\infty),
\end{align}
we define the relativistic Nelson operator with ultraviolet
cutoff at $\UV\in[0,\infty)$ by
\begin{align*}
(H_{\UV}\Phi)(x)&\coloneq (\psi(-\ii\nabla)\Phi)(x)+\Id\Gamma(\omega)\Phi(x)
+\vp(e_xv_{\UV})\Phi(x)+E_{\UV}^{\ren}\Phi(x),
\end{align*}
for a.e. $x\in\RR^2$ and every 
\begin{align*}
\Phi\in\dom(H_\UV)=\dom(H_0)\coloneq H^1(\RR^2,\Fock)\cap L^2(\RR^2,\dom(\Id\Gamma(\omega))).
\end{align*}
Here and henceforth, $\dom(\cdot)$ stands for domains of selfadjoint operators equipped with their graph norms.
The Sobolov space $H^1(\RR^2,\Fock)$ is defined via the $\Fock$-valued Fourier transformation $F$,
that is given by Bochner-Lebesgue integrals
\begin{align*}
(F\Phi)(\xi)&\coloneq\frac{1}{2\pi}\int_{\RR^2}\eul^{-\ii\xi\cdot x}\Phi(x)\Id x,\quad \xi\in\RR^2,
\end{align*}
whenever $\Phi\in L^1(\RR^2,\Fock)\cap L^2(\RR^2,\Fock)$, and isometric extension to $L^2(\RR^2,\Fock)$.
The selfadjoint operator $\psi(-\ii\nabla)$ is defined by means of $F$ as well, i.e., by definition,
\begin{align*}
(F\psi(-\ii\nabla)\Phi)(\xi)&=\psi(\xi)(F\Phi)(\xi),\quad\text{a.e. $\xi\in\RR^2$,}
\end{align*}
for every $\Phi\in\dom(\psi(-\ii\nabla))=H^1(\RR^2,\Fock)$. Employing the 
Kato-Rellich theorem and the standard relative bound
\begin{align}\label{rbvp}
\|\vp(e_xv_{\UV})\phi\|&\le2^{1/2} \|(\omega^{-1/2}\vee1)v_{\UV}\|\|(1+\Id\Gamma(\omega))^{1/2}\phi\|,
\quad x\in\RR^2,
\end{align}
which is available for all $\phi$ in the form domain of $\Id\Gamma(\omega)$, we can indeed verify selfadjointness of 
every $H_{\UV}$ with $\UV\in(0,\infty)$ on $\dom(H_0)$.

Finally, the renormalized relativistic Nelson operator in two space dimensions is given by
\begin{align}\label{reslimHUV}
H&\coloneq H_{\infty}\coloneq\underset{\UV\to\infty}{\textrm{norm-resolvent-lim}} \ H_\UV.
\end{align}
Existence of the above limit has been established in \cite{Sloan.1974,Schmidt.2019}. It has been
re-proven in \cite{HinrichsMatte.2023} as an automatic byproduct of the proof strategy for the Feynman--Kac 
formula established there; see \cref{sec:renrev} for yet another proof.

\section{Feynman--Kac formulas for the full Hamiltonians}\label{sec:FKH}

\noindent
Throughout this proceeding we fix a filtered probability space $(\Omega,\fr{F},(\fr{F}_t)_{t\ge0},\PP)$
satisfying the usual hypotheses as well as a $(\fr{F}_t)_{t\ge0}$-L\'{e}vy process $X$
whose L\'{e}vy symbol is $-\psi$ and all whose paths are c\`{a}dl\`{a}g.
Expectations with respect to $\PP$ will be denoted by $\EE$, and we
put $X_{t-}\coloneq \lim_{s\uparrow t}X_s$ for all $t>0$. We recall that $X$ has characteristics
$(0,0,\nu)$, where its L\'{e}vy measure $\nu$ has an explicitly known density with respect to the Lebesgue-Borel measure;
see, e.g., \cite[\textsection2.2]{HinrichsMatte.2023}.

Next, we define the stochastic processes appearing in our Feynman--Kac integrands:
For every $\UV\in[0,\infty]$, we introduce the following well-defined $L^2(\RR^2)$-valued
Bochner-Lebesgue integrals,
\begin{align}\label{defUplusminus}
U_{\UV,t}^-&\coloneq\int_0^t\eul^{-s\omega}e_{X_s}v_{\UV}\Id s,\quad
U_{\UV,t}^+\coloneq\int_0^t\eul^{-(t-s)\omega}e_{X_s}v_{\UV}\Id s,\quad t\ge0.
\end{align}
For both choices of the sign, $(U_{\UV,t}^\pm)_{t\ge0}$ is a continuous and adapted
$L^2(\RR^2)$-valued process \cite[Appendix~B]{HinrichsMatte.2023}.
For finite $\UV$, the analogue of Feynman's complex action in our model is given by
\begin{align}\label{defuUV}
u_{\UV,t}&\coloneq \int_0^t\langle U_{\UV,s}^+|e_{X_s}v_{\UV}\rangle\Id s-tE_{\UV}^{\ren},
\quad t\ge0,\,\UV\in[0,\infty).
\end{align}
It defines a real-valued continuous and adapted process. In \cite{HinrichsMatte.2023} and, in a slightly more sketchy fashion,
in \cref{lem:rewriteuUV}, we re-write this
expression employing It\^{o}'s formula and obtain a more regular one where the ultraviolet cutoff can be dropped.
This results in the following formula for the limiting complex action: Setting
\begin{align*}%\label{def:beta}
\beta&\coloneq(\omega+\psi)^{-1}v\in L^2(\RR^2),
\end{align*}
we define
\begin{align}\label{def:action}
u_{\infty,t}&\coloneq \int_{(0,t]\times\RR^2}\langle U_{\infty,s}^{+}|e_{X_{s-}}(e_z-1)\beta\rangle\Id\wt{N}(s,z)
-\langle U_{\infty,t}^{+}|e_{X_t}\beta\rangle,\quad t\ge0.
\end{align}
Here the integral is an isometric stochastic integral with respect to the martingale valued measure
$\wt{N}$ of $X$; see, e.g., \cite{Applebaum.2009} for the nomenclature used here and detailed explanations.
The corresponding stochastic integral process is a c\`{a}dl\`{a}g $L^2$-martingale. 
(In fact, the paths of $u_\infty$ are $\PP$-a.s. continuous \cite[Corollary~6.9]{HinrichsMatte.2023}.)

The last building blocks in our Feynman--Kac integrands are the operator norm convergent series
\begin{align*}%\label{defFtg}
F_t(h)&\coloneq \sum_{n=0}^\infty \frac{1}{n!}\ad(h)^n\eul^{-t\Id\Gamma(\omega)},\quad t>0,\,h\in L^2(\RR^2),
\end{align*}
which define analytic maps $F_t:L^2(\RR^2)\to \LO(\Fock)$. For these maps and their derivatives we have the bounds
\cite{GueneysuMatteMoller.2017}
\begin{align}\label{bdFt}
\|F_t(h)\|&\le \mc{S}(\|h\|_t),\quad \|F_t'(h)\tilde{h}\|\le 4\|\tilde{h}\|_t\mc{S}(\|h\|_t),\quad h,\tilde{h}\in L^2(\RR^2),
\end{align}
where $\|h\|_t^2\coloneq\|h\|^2+\|(t\omega)^{-1/2}h\|^2$ and $\mc{S}(z)\coloneq\sum_{n=0}^\infty(n!)^{-1/2}(2z)^n$, $z\in\CC$.

We are now in a position to introduce the Fock space operator-valued parts of the Feynman--Kac integrands
for the entire matter-radiation system. For all $\UV\in[0,\infty]$ and $x\in\RR^2$, they are given by the adjoints of
\begin{align}\label{def:Wt}
W_{\UV,t}(x)&\coloneq \eul^{u_{\UV,t}}F_{t/2}(-e_xU_{\UV,t}^{+})F_{t/2}(-e_xU_{\UV,t}^{-})^*
=\Gamma(e_{x})W_{\UV,t}(0)\Gamma(e_{-x}),
\end{align}
whenever $t>0$, and $W_{\UV,0}(x)\coloneq \id_{\Fock}$.

The next theorem is a special case of \cite[Theorem~2.1]{HinrichsMatte.2023}.
Departing from a few technical key ingredients presented in \cref{sec:keys}
we shall obtain an otherwise independent proof of 
the asserted formula \cref{eq:FK} (for a.e. $x$) at the end of \cref{sec:renrev}.

\begin{thm}[{\bf Feynman--Kac formulas for the entire system}]
Let $\UV\in[0,\infty]$, $\Phi\in L^2(\RR^{2},\Fock)$ and $t>0$. Then $\eul^{-tH_{\UV}}\Phi$ 
has a unique continuous representative which is given by
\begin{align}\label{eq:FK}
(\eul^{-tH_{\UV}}\Phi)(x)&=\EE\left[W_{\UV,t}(x)^*\Phi(x+X_t)\right],\quad x\in\RR^2.
\end{align}
\end{thm}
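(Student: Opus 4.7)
My plan is to derive \cref{eq:FK} by assembling it from the fiber Feynman--Kac formulas that are to be established in the preceding sections, combined with the Lee--Low--Pines decomposition that reduces the full Hamiltonian to a direct integral over total momenta $\xi\in\RR^2$.

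The first step is to recall that, as will be shown in \cref{sec:LLP}, a Lee--Low--Pines transformation $\mc{L}$ turns $H_{\UV}$ into the direct integral $\int^\oplus_{\RR^2}\wh{H}_{\UV}(\xi)\,\Id\xi$ on $L^2(\RR^2,\Fock)\cong\int^\oplus_{\RR^2}\Fock\,\Id\xi$ (and likewise for $\UV=\infty$). Consequently, for each $\UV\in[0,\infty]$ and every $\Phi\in L^2(\RR^2,\Fock)$,
\begin{align*}
(\mc{L}\eul^{-tH_{\UV}}\Phi)(\xi)=\eul^{-t\wh{H}_{\UV}(\xi)}(\mc{L}\Phi)(\xi),\quad\text{a.e. }\xi\in\RR^2.
\end{align*}
Applying the fiber Feynman--Kac formula to be proven in \cref{sec:FKxi}, I would express each $\eul^{-t\wh{H}_{\UV}(\xi)}(\mc{L}\Phi)(\xi)$ as an expectation $\EE[\wh{W}_{\UV,t}(\xi)^*(\mc{L}\Phi)(\xi)]$ involving the fiber integrand. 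Undoing $\mc{L}$ then amounts to applying the inverse Fourier transformation in $\xi$. A calculation using the second-quantization identity $W_{\UV,t}(x)=\Gamma(e_x)W_{\UV,t}(0)\Gamma(e_{-x})$ from \cref{def:Wt} together with the fact that the L\'{e}vy process $X$ accounts precisely for the matter-momentum shift encoded in the fiber variable, should yield
\begin{align*}
(\eul^{-tH_{\UV}}\Phi)(x)=\EE\left[W_{\UV,t}(x)^*\Phi(x+X_t)\right],
\end{align*}
at first for a.e.\ $x\in\RR^2$. An application of Fubini's theorem to interchange $\EE$ with the $\xi$-integration is central here; joint measurability of $(\xi,\omega)\mapsto\wh{W}_{\UV,t}(\xi)^*(\mc{L}\Phi)(\xi)$ and the $L^2$-bounds \cref{bdFt} provide the required integrability.

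For finite cutoff $\UV<\infty$, the fiber integrand is defined via the regular complex action $u_{\UV,t}$ from \cref{defuUV}, and the procedure above is essentially bookkeeping after the standard FK formula for $\wh{H}_{\UV}(\xi)$ has been established. For $\UV=\infty$, the passage to the limit is the delicate step: I would combine the norm-resolvent convergence $\wh{H}_{\UV}(\xi)\to\wh{H}(\xi)$, which implies strong convergence of the associated semigroups, with the convergence of the fiber integrands driven by the key estimates from \cref{sec:keys}. The latter rely on the Itô reformulation \cref{def:action} of the complex action, which removes the dependence on $\UV$ in the singular part and permits uniform $L^p$-bounds.

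The main obstacle is controlling the $\UV\to\infty$ limit simultaneously at the semigroup and integrand levels after reassembly. Strong convergence of $\eul^{-t\wh{H}_{\UV}(\xi)}$ at each $\xi$ combined with uniform operator bounds (themselves coming from the semiboundedness of $\wh{H}_{\UV}(\xi)$ uniformly in $\UV$, a consequence of \cref{bdFt} and the estimates in \cref{sec:keys}) lets one invoke dominated convergence both in $\xi$ and under the expectation $\EE$. One must also upgrade the identity, which initially holds for a.e.\ $x$, to the stated continuous representative; this follows from the $\Fock$-continuity of $x\mapsto \EE[W_{\UV,t}(x)^*\Phi(x+X_t)]$, which in turn is a consequence of strong continuity of $x\mapsto\Gamma(e_x)$ on bounded subsets of $\Fock$, dominated convergence applied to the expectation, and the $L^2$-continuity of translations acting on $\Phi$.
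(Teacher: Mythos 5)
Your approach matches the paper's own ``alternative proof'' of \cref{eq:FK} given at the end of \cref{sec:renrev}: fiber-decompose via the Lee--Low--Pines transformation, insert the fiber Feynman--Kac formulas from \cref{sec:FKxi}, and reassemble by inverse Fourier transformation together with \cref{def:Wt}. Two of your worries are superfluous, and one claim does not go through.

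On the superfluous part: you treat the passage $\UV\to\infty$ at the reassembly stage as ``the delicate step,'' proposing to combine norm-resolvent convergence of $\wh{H}_{\UV}(\xi)$ with convergence of integrands. This is unnecessary. \cref{thm:renFKxi} already yields $\eul^{-t\wh{H}(\xi)}=\wh{T}_{\infty,t}(\xi)=\EE[\wh{W}_{\infty,t}(\xi)^*]$ directly, and \cref{thm:fibH} gives $\scr{U}H\scr{U}^*=\int^\oplus\wh{H}(\xi)\Id\xi$; once those are in hand, the $\UV=\infty$ reassembly is verbatim the same as for finite $\UV$. The limits were already taken earlier, fiber by fiber. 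A second, smaller omission: Fourier inversion is only an $L^2$ statement for general $\Phi$, so you should (as the paper does) first work with $\Psi$ having integrable Fourier transform, set $\Phi(x)=\Gamma(e_x)\Psi(x)$, obtain \cref{eq:FK} there, and then use the $L^2$-bound provided by \cref{mb} to extend by density. Glossing over this leaves Fubini and Fourier inversion resting on sand for general $\Phi$.

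The genuine gap is your argument for the continuous representative. You propose to deduce continuity of $x\mapsto\EE[W_{\UV,t}(x)^*\Phi(x+X_t)]$ from strong continuity of $x\mapsto\Gamma(e_x)$, dominated convergence, and ``$L^2$-continuity of translations acting on $\Phi$.'' This does not work for general $\Phi\in L^2(\RR^2,\Fock)$: translation continuity in $L^2$ gives $\int\|\Phi(x+y)-\Phi(x)\|^2\Id x\to0$ but says nothing about pointwise (or $\PP$-a.s.) convergence of $\Phi(x+X_t)$ as $x$ varies, which is exactly what dominated convergence under $\EE$ would require. The pointwise evaluation $\Phi(x+X_t)$ is only defined for a.e.\ $x$ at each elementary event, and changing $x$ changes the null set. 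This is precisely why the present paper is careful to state that this route yields \cref{eq:FK} ``for a.e.\ $x$'' only, and defers the existence and uniqueness of the continuous representative to \cite[Theorem~2.1]{HinrichsMatte.2023}. Proving continuity for all $\Phi\in L^2$ uses a smoothing mechanism coming from the transition densities of the L\'{e}vy process $X$, not from translation continuity alone, and is a substantive additional step that your sketch does not supply.
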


\section{Lee-Low-Pines transformation and fiber Hamiltonians}\label{sec:LLP}

\noindent
The Hamiltonians $H_{\UV}$ and $H$ are invariant under translations of the entire matter-radiation 
system in space and can therefore be represented as direct integrals with respect to the system's total momentum
of selfadjoint fiber Hamiltonians. This is implemented by the Lee-Low-Pines transformation in two dimensions defined by
\begin{align}\label{defLLP}
\scr{U}&\coloneq F\int_{\RR^2}^\oplus\Gamma(e_{-x})\Id x.
\end{align}

We shall briefly discuss the transformation by $\scr{U}$ of the ultraviolet regularized Hamiltonians
$H_{\UV}$ with $\UV\in[0,\infty)$:

For $i\in\{1,2\}$, we let $K_i$ denote the maximal operator of multiplication with 
$k_i$ on $L^2(\RR^2)$, i.e., $(K_if)(k)=k_if(k)$, a.e. $k\in\RR^2$, $f\in \dom(K_i)$.
Further, we put $\Id\Gamma(K)\coloneq(\Id\Gamma(K_1),\Id\Gamma(K_2))$ and
$\dom(\Id\Gamma(K))\coloneq \bigcap_{i=1}^2\dom(\Id\Gamma(K_i))$. Then the fiber
Hamiltonian $\wh{H}_{\UV}(\xi)$ with $\UV\in[0,\infty)$ attached to the total
momentum $\xi\in\RR^2$ turns out to be
\begin{align*}
\wh{H}_{\UV}(\xi)&\coloneq \psi(\xi-\Id\Gamma(K))+\Id\Gamma(\omega)+\vp(v_{\UV})+E_{\UV}^{\ren}.
\end{align*}
In view of \eqref{rbvp} (with $x=0$) this operator is selfadjoint on 
$\dom(\wh{H}_{\UV}(\xi))=\dom(\Id\Gamma(\omega))$.

In fact, it is straightforward to verify that %$\scr{U}H^1(\RR^2,\Fock)=\wh{\dom}_0$ with
\begin{align*}
%\wh{\dom}_0\coloneq
\scr{U}H^1(\RR^2,\Fock)=\bigg\{\Psi\in L^2(\RR^2,\Fock)\,\bigg|&\,\Psi(\xi)\in\dom(\Id\Gamma(K))\,\;\text{a.e. $\xi$, and}
\\
&\int_{\RR^2}\|\psi(\xi-\Id\Gamma(K))\Psi(\xi)\|^2\Id\xi<\infty\bigg\},
\end{align*}
and, for all $\Phi\in H^1(\RR^2,\Fock)$,
\begin{align*}
(\scr{U}\psi(-\ii\nabla)\Phi)(\xi)&=\psi(\xi-\Id\Gamma(K))(\scr{U}\Phi)(\xi),\quad \text{a.e. $\xi\in\RR^2$.}
\end{align*}
Moreover, $\scr{U}$ maps $L^2(\RR^2,\dom(\Id\Gamma(\omega)))$ into itself and 
\begin{align*}
\scr{U}\Id\Gamma(\omega)\Phi=\Id\Gamma(\omega)\scr{U}\Phi,\quad\Phi\in L^2(\RR^2,\dom(\Id\Gamma(\omega))).
\end{align*}
Finally, we have the well-known commutation relations
\begin{align}\label{comrelGammavp}
\Gamma(e_{-x})\vp(e_xv_{\UV})\phi&=\vp(v_{\UV})\Gamma(e_{-x})\phi,\quad x\in\RR^2,\,\UV\in[0,\infty),
\end{align}
for, e.g., all $\phi\in\dom(\Id\Gamma(\omega))$. 
%Noticing that $\Psi \in L^2(\RR^2,\dom(\Id\Gamma(\omega)))$ belongs to $\wh{D}_0$, if and only if $\xi\mapsto\psi(\xi)\Psi(\xi)$ is in $L^2(\RR^2,\Fock)$
Putting these remarks together and observing strong resolvent measurability of 
the family $(\wh{H}_{\UV}(\xi))_{\xi\in\RR^2}$, we infer indeed that
\begin{align}\label{fibdecHUV}
\scr{U}H_{\UV}\scr{U}^*=\int_{\RR^2}^\oplus\wh{H}_{\UV}(\xi)\Id\xi,\quad\UV\in[0,\infty).
\end{align}
An analogous relation for $H$ is derived in \cref{thm:fibH} below.

\section{Main technical ingredients}\label{sec:keys}

\noindent
In this section we collect the main technical ingredients from \cite{HinrichsMatte.2023} that
we shall employ in the remaining part of this proceeding to give an otherwise fairly self-contained
derivation of Feynman--Kac formulas for fiber Hamiltonians.

We start by explaining where our formula \cref{def:action} for the 
complex action $u_{\infty,t}$ originates from. Notice that both terms in the definition
\cref{defuUV} of $u_{\UV,t}$ with finite $\UV$ become ill-defined when the cutoff at $\UV$ is dropped.
We can, however, exploit the presence of the oscillating terms $e_{X_s}$ under the integral
in \cref{defuUV} to arrive at a new formula for $u_{\UV,t}$ comprising more regular terms.
This is done with the help of It\^{o}'s formula:

\begin{lem}\label{lem:rewriteuUV}
Let $\UV\in[0,\infty)$. Then, $\PP$-a.s.,
\begin{align}\label{foruUVIto}
u_{\UV,t}&=\int_{(0,t]\times\RR^2}
\langle U^+_{\UV,s}|e_{X_{s-}}(e_z-1)\beta\rangle \Id\wt{N}(s,z)
-\langle U^+_{\UV,t}|e_{X_t}\beta\rangle,\quad t\ge0.
\end{align}
\end{lem}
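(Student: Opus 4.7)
The plan is to derive \cref{foruUVIto} by applying It\^{o}'s formula fiberwise in $k$ to the product process $\ol{U^+_{\UV,t}(k)}\,\eul^{-\ii k\cdot X_t}$ and then integrating against $\beta(k)\Id k$. Unfolding the inner product, one has $\langle U^+_{\UV,t}|e_{X_t}\beta\rangle=\int_{\RR^2} \ol{U^+_{\UV,t}(k)}\,\eul^{-\ii k\cdot X_t}\beta(k)\Id k$, so a stochastic decomposition of the $k$-integrand will transfer directly into one for $u_{\UV,t}=\int_0^t\langle U^+_{\UV,s}|e_{X_s}v_{\UV}\rangle\Id s-tE^{\ren}_{\UV}$.

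For fixed $k\in B_{\UV}$, the Bochner integral $H_t(k)\coloneq\ol{U^+_{\UV,t}(k)}$ is continuous and of bounded variation in $t$ with classical differential $\Id H_t(k)=(-\omega(k)H_t(k)+\eul^{\ii k\cdot X_t}v_{\UV}(k))\,\Id t$. On the other hand, It\^{o}'s formula applied to $x\mapsto\eul^{-\ii k\cdot x}$ along the pure-jump L\'{e}vy process $X$ with L\'{e}vy symbol $-\psi$ and L\'{e}vy measure $\nu$ yields, using that $\psi$ is even,
\begin{equation*}
\eul^{-\ii k\cdot X_t}=1-\psi(k)\int_0^t\eul^{-\ii k\cdot X_s}\Id s+\int_{(0,t]\times\RR^2}\eul^{-\ii k\cdot X_{s-}}(\eul^{-\ii k\cdot z}-1)\Id\wt{N}(s,z).
\end{equation*}
The product rule for semimartingales, whose covariation bracket vanishes because $H(k)$ is continuous of bounded variation, then gives
\begin{align*}
H_t(k)\eul^{-\ii k\cdot X_t}&=-(\omega(k)+\psi(k))\int_0^t H_s(k)\eul^{-\ii k\cdot X_s}\Id s+tv_{\UV}(k)\\
&\quad+\int_{(0,t]\times\RR^2}H_s(k)\eul^{-\ii k\cdot X_{s-}}(\eul^{-\ii k\cdot z}-1)\Id\wt{N}(s,z),
\end{align*}
the $tv_{\UV}(k)$ term appearing from the cancellation $\eul^{\ii k\cdot X_s}v_{\UV}(k)\cdot\eul^{-\ii k\cdot X_s}=v_{\UV}(k)$ in the drift coming from $\Id H_s(k)\cdot\eul^{-\ii k\cdot X_s}$.

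Multiplying by $\beta(k)$ and integrating over $k\in\RR^2$, the identity $(\omega+\psi)\beta=v$ combined with $\supp H_s(\cdot)\subset B_{\UV}$ (so that $vH_s=v_{\UV}H_s=v_{\UV}\ol{U^+_{\UV,s}}$) converts the first drift, via ordinary Fubini, into $-\int_0^t\langle U^+_{\UV,s}|e_{X_s}v_{\UV}\rangle\Id s$; the deterministic term yields $t\int_{\RR^2}v_{\UV}(k)\beta(k)\Id k=tE^{\ren}_{\UV}$ directly from definition \cref{defEren}. Rearrangement then produces \cref{foruUVIto}.

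The main technical obstacle is justifying the exchange of $\Id k$ with the $\wt{N}$-stochastic integral (stochastic Fubini). This requires predictable joint measurability and appropriate $L^2$-integrability of the kernel $\ol{U^+_{\UV,s}(k)}\eul^{-\ii k\cdot X_{s-}}(\eul^{-\ii k\cdot z}-1)\beta(k)$ on $\Omega\times[0,t]\times\RR^2\times\RR^2$; this is ensured by the ultraviolet cutoff confining $k$ to the bounded ball $B_{\UV}$ (making $v_{\UV}\in L^2$), together with the pointwise bound $|\eul^{-\ii k\cdot z}-1|\le|k||z|\wedge 2$ and the standard moment condition $\int(1\wedge|z|^2)\Id\nu(z)<\infty$.
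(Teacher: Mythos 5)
Your derivation is correct and reaches the paper's formula, but it takes a genuinely different route: you decompose fiberwise in $k$, apply the scalar It\^{o} formula to $e^{-\ii k\cdot X_t}$ and the product rule to $\ol{U^+_{\UV,t}(k)}\,e^{-\ii k\cdot X_t}$ for each fixed $k\in B_{\UV}$, and then reassemble the $L^2$ pairing by integrating over $k$ --- which, for the martingale term, requires a stochastic Fubini theorem. The paper instead works directly at the Hilbert-space level: it combines the $L^2(\RR^2)$-valued integral equation \cref{IEUplus} for $U^+_{\UV,t}$ with an It\^{o}/product formula applied to $\langle U^+_{\UV,t}\,|\,\chi_{B_{\UV}}e^{-\ii K\cdot X_t}\beta\rangle$, where $z\mapsto\chi_{B_{\UV}}e^{-\ii K\cdot z}\beta$ is a bounded, smooth $L^2(\RR^2)$-valued map (precisely because of the cutoff). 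This produces the $\wt{N}$-integral with the inner product already inside, so no stochastic Fubini step is needed. Your route has the advantage of invoking only scalar-valued It\^{o} calculus, at the cost of the Fubini interchange --- which you correctly identify as the main technical obstacle and for which your sketch of the integrability input ($v_{\UV}\in L^2$, $\supp U^+_{\UV,s}\subset B_{\UV}$, $|e^{-\ii k\cdot z}-1|\le|k||z|\wedge 2$, $\int(1\wedge|z|^2)\Id\nu<\infty$) is the right one. The paper's route buys a cleaner, one-shot argument by accepting a Banach-space-valued It\^{o} formula. Both depend on the same two structural inputs: the integral equation for $U^+_{\UV}$ (implied by the fundamental theorem of calculus) and the algebraic identity $\chi_{B_{\UV}}(\omega+\psi)\beta=v_{\UV}$, which in your version surfaces in the drift cancellation and in the evaluation $\int v_{\UV}\beta\,\Id k=E_{\UV}^{\ren}$. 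One cosmetic remark: the predictable integrand against $\Id\wt{N}$ should formally carry $H_{s-}(k)$, but as you note $H(k)$ is continuous so $H_{s-}=H_s$; stating this explicitly would be cleaner.
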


As the reader will notice, in \cref{def:action} we turn the identity \cref{foruUVIto} satisfied
for finite $\UV$ into a definition of $u_{\infty,t}$. Recall that $U^+_{\infty,t}$
is well-defined right away. Further, it is not difficult to check that the stochastic integral
in \cref{foruUVIto} is meaningful for $\UV=\infty$ as well.

\begin{proof}[Sketch of the proof of \cref{lem:rewriteuUV}.]
The first step is to observe the integral equation
\begin{align}\label{IEUplus}
U_{\UV,t}^+&=\int_0^t(e_{X_s}v_{\UV}-\omega U^+_{\UV,s})\Id s,\quad t\ge0,
\end{align}
which can be derived with the help of \cref{defUplusminus} and
the fundamental theorem of calculus for the Lebesgue integral \cite[Lemma~4.1]{HinrichsMatte.2023}.
Since $\RR^2\ni z\mapsto \chi_{B_{\UV}}\eul^{-\ii K\cdot z}\beta\in L^2(\RR^2)$ is bounded and smooth
with bounded partial derivatives of any order, and since
$U^+_{\UV,t}=\chi_{B_{\UV}}U^+_{\UV,t}$, we can combine \cref{IEUplus} with It\^{o}'s formula to $\PP$-a.s. get
\begin{align*}
\langle U^+_{\UV,t}|e_{X_t}\beta\rangle
&=\int_0^t\langle e_{X_s}v_{\UV}|e_{X_s}\beta\rangle\Id s
\\
&\quad-\int_0^t\langle \omega U^+_{\UV,s}|e_{X_s}\beta\rangle\Id s
-\int_0^t\langle \psi U^+_{\UV,s}|e_{X_s}\beta\rangle\Id s
\\
&\quad +\int_{(0,t]\times\RR^2}
\langle U^+_{\UV,s}|e_{X_{s-}}(e_{z}-1)\beta\rangle \Id\wt{N}(s,z),\quad t\ge0.
\end{align*}
Here the integral in the first line of the right hand side equals $tE_{\UV}^{\ren}$; recall \cref{defEren}. Further,
since $\chi_{B_{\UV}}(\omega+\psi)\beta=v_{\UV}$, the expression in the second line is equal to 
$- u_{\UV,t}-tE_{\UV}^{\ren}$; see \cref{defuUV}.  
\end{proof}

Employing the formulas for the complex action in \cref{def:action,foruUVIto} it is
possible to derive the bounds and convergence relations of the next lemma, whose proof can be found in
\cite[\textsection6]{HinrichsMatte.2023}.
%This lemma is the crucial ingredient for the proof of \cref{prop:mb}.
In fact, the second members on the right hand sides of \cref{def:action,foruUVIto} can be estimated trivially.
The main abstract ingredients used to deal with the stochastic integrals in \cref{def:action,foruUVIto}
are Kunita's inequality and an exponential tail estimate for L\'{e}vy type stochastic integrals
due to Applebaum and Siakalli \cite{Applebaum.2009,Siakalli.2019}.

\begin{lem}[{\bf Exponential moment bound and convergence}]\label{lem:expmb}
Let $p\in[1,\infty)$.
Then there exists $a_p\in(0,\infty)$, also depending on the model parameters $m_{\p}$,
$m_{\bos}$ and $g$, such that
\begin{align*}
\sup_{\UV\in[0,\infty]}\EE\Big[\sup_{s\in[0,t]}\eul^{pu_{\UV,s}}\Big]&\le \eul^{a_p(1+t)},\quad t\ge0.
\end{align*}
Furthermore, 
\begin{align*}
\EE\Big[\sup_{s\in[0,t]}\big|\eul^{u_{\UV,s}}-\eul^{u_{\infty,s}}\big|^p\Big]\xrightarrow{\;\;\UV\to\infty\;\;}0,\quad t\ge0.
\end{align*}
\end{lem}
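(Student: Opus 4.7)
The plan is to exploit \cref{foruUVIto,def:action} in order to write $u_{\UV,t}=I_{\UV,t}-P_{\UV,t}$ uniformly for all $\UV\in[0,\infty]$, with $I_{\UV,t}\coloneq\int_{(0,t]\times\RR^2}\langle U^+_{\UV,s}|e_{X_{s-}}(e_z-1)\beta\rangle\,\Id\wt{N}(s,z)$ and $P_{\UV,t}\coloneq\langle U^+_{\UV,t}|e_{X_t}\beta\rangle$. First I would collect pathwise, deterministic bounds uniform in $\UV$: from $|v_\UV|\le|v|$ pointwise combined with the radial computation $\|\eul^{-s\omega}v\|\le C s^{-1/2}\eul^{-sm_\bos}$ (which uses $m_\bos>0$), one obtains $\|U^+_{\UV,s}\|\le C\sqrt{s}$ and, directly from the Bochner formula, the sharper pointwise bound $|U^+_{\UV,s}(k)|\le|v(k)|/\omega(k)$; elementary estimates give $\beta\in L^2(\RR^2)$ and hence $|P_{\UV,t}|\le C'\sqrt{t}$.

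The exponential moment estimate then rests on the Applebaum--Siakalli exponential tail inequality for compensated Poisson integrals, applied to $I_{\UV,\cdot}$. The integrand $F(s,z)=\langle U^+_{\UV,s}|e_{X_{s-}}(e_z-1)\beta\rangle$ satisfies a deterministic $L^\infty$ bound $|F|\le 2C\sqrt{t}\,\|\beta\|$ for $s\le t$, uniformly in $\UV$. For the required deterministic $L^2(\nu)$ bound, I would use the L\'evy--Khintchine identity $\psi(k)=\tfrac12\int(1-\cos k\cdot z)\,\nu(\Id z)$ to convert $\int|F(s,z)|^2\,\nu(\Id z)$ into a bilinear $\psi$-weighted expression in $\overline{U^+_{\UV,s}}\beta$, which, thanks to the additional $\omega^{-1}$ decay afforded by the sharper pointwise bound, stays bounded uniformly in $s\le t$ and $\UV$. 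Applebaum--Siakalli then yields $\PP(\sup_{s\le t}|I_{\UV,s}|\ge\lambda)\le C\eul^{-c(t)\lambda}$, whose integration against $p\lambda\eul^{p\lambda}\,\Id\lambda$ produces the claimed $\eul^{a_p(1+t)}$-bound after combining with the deterministic estimate on $P_{\UV,t}$.

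For the convergence statement, dominated convergence in the Bochner integral \cref{defUplusminus} gives $\sup_{s\le t}\|U^+_{\UV,s}-U^+_{\infty,s}\|\to 0$ pathwise and, being dominated by $2C\sqrt{t}$, in every $L^q(\PP)$. This immediately yields $L^q(\PP)$-convergence of $P_{\UV,\cdot}$. For the stochastic integrals, the integrand of $I_{\UV,\cdot}-I_{\infty,\cdot}$ is bounded by $\|U^+_{\UV,s}-U^+_{\infty,s}\|\cdot\|(e_z-1)\beta\|$, so Kunita's inequality yields $\EE[\sup_{s\le t}|I_{\UV,s}-I_{\infty,s}|^q]\to 0$ for every $q\ge1$. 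Hence $\sup_{s\le t}|u_{\UV,s}-u_{\infty,s}|\to 0$ in every $L^q(\PP)$, and the elementary inequality $|\eul^a-\eul^b|\le|a-b|(\eul^a+\eul^b)$ together with H\"older and the uniform exponential moments just established (applied with exponent $2p$) upgrade this to the asserted $L^p$-convergence of $\eul^{u_{\UV,\cdot}}$.

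The main obstacle is to obtain the $L^2(\nu)$ bound on $F(s,z)$ uniformly in $\UV$: the crude Cauchy--Schwarz estimate $|F|^2\le\|U^+_{\UV,s}\|^2\,\|(e_z-1)\beta\|^2$, integrated against $\nu$, produces $\|U^+\|^2\cdot 2\|\psi^{1/2}\beta\|^2$, whose second factor diverges (a two-dimensional radial computation shows $\int\psi|\beta|^2\,\Id k=\infty$). One must instead exploit the full bilinear structure of $F$ on the Fourier side, relying on the strict positivity of the boson mass and the identity $\psi\beta=v-\omega\beta$ to extract enough high-momentum decay from $U^+_{\UV,s}$ beyond what the $L^2$-norm alone captures.
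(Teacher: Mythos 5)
Your proposal follows the same route the paper delegates to \cite[\S6]{HinrichsMatte.2023}: decompose $u_{\UV,t}$ as the stochastic integral $I_{\UV,t}$ minus the deterministic-in-form boundary term $P_{\UV,t}$ from \cref{def:action,foruUVIto}, estimate $P_{\UV,t}$ trivially, and control $I_{\UV,\cdot}$ by the Applebaum--Siakalli exponential tail bound and Kunita's inequality. You correctly diagnose the crucial obstruction: the crude Cauchy--Schwarz step produces the factor $\|\psi^{1/2}\beta\|^2$, and $\int\psi|\beta|^2\,\Id k$ is logarithmically divergent in two dimensions. And the proposed remedy is right in principle: expanding $\int|F(s,z)|^2\,\Id\nu(z)$ bilinearly gives $\iint\overline{g(k)}g(k')\bigl(\psi(k)+\psi(k')-\psi(k-k')\bigr)\Id k\,\Id k'$ with $g(k)=\overline{U^+_{\UV,s}(k)}e^{-\ii k\cdot X_{s-}}\beta(k)$, the L\'evy--Khintchine representation yields $0\le\psi(k)+\psi(k')-\psi(k-k')\le 2\psi(k)^{1/2}\psi(k')^{1/2}$, and combining this with the pointwise bound $|U^+_{\UV,s}(k)|\le|v(k)|/\omega(k)$ produces $\int|F|^2\Id\nu\le 2\bigl(\int|U^+_{\UV,s}||\beta|\psi^{1/2}\Id k\bigr)^2$, whose inner integrand decays like $|k|^{-5/2}$ and so is finite uniformly in $\UV$ and $s$. (Your invocation of the identity $\psi\beta=v-\omega\beta$ is not needed for this; the pointwise bound plus the Cauchy--Schwarz factorization of the $\psi$-kernel already do the job.)

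There is, however, a genuine gap in your convergence argument. After the effort you spent explaining why the estimate $\|U^+_{\UV,s}\|\,\|(e_z-1)\beta\|$ is useless because its $\nu$-square is infinite, you bound the integrand of $I_{\UV,\cdot}-I_{\infty,\cdot}$ by $\|U^+_{\UV,s}-U^+_{\infty,s}\|\,\|(e_z-1)\beta\|$ and feed this into Kunita's inequality. But Kunita requires precisely the deterministic $\int|\cdot|^2\,\Id\nu(z)$ control, and that quantity is again $\|U^+_{\UV,s}-U^+_{\infty,s}\|^2\cdot 2\|\psi^{1/2}\beta\|^2=\infty$ --- the same divergence you just flagged. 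You must run the bilinear argument for the difference as well: since $U^+_{\UV,s}-U^+_{\infty,s}$ satisfies $|U^+_{\UV,s}(k)-U^+_{\infty,s}(k)|\le\chi_{\{|k|>\UV\}}\,|v(k)|/\omega(k)$, the bilinear bound gives $\int|F_{\UV}-F_\infty|^2\Id\nu\le 2\bigl(\int_{|k|>\UV}|v|\omega^{-1}|\beta|\psi^{1/2}\Id k\bigr)^2\to0$ as $\UV\to\infty$, uniformly in $s$, and only then does Kunita deliver $L^q$-convergence of $\sup_{s\le t}|I_{\UV,s}-I_{\infty,s}|$. With that repair the remaining steps (H\"older with the uniform exponential moments, dominated convergence for $P_{\UV,\cdot}$) go through as you describe.
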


The next result we shall apply without detailed proof is the flow relation \cref{floweq00}
implied by \cite[Lemma~7.9]{HinrichsMatte.2023}.
Since we only consider finite $\UV$ in \cref{floweq00}, its proof is, however, fairly elementary:
Applying both sides of \cref{floweq00} to an exponential vector in Fock space, i.e.,
a vector of the form $\expv{h}\coloneq F_1(h)(1,0,0,\ldots\,)$, the proof is reduced 
to three relations involving the integral processes $u_{\UV}$ and $U^\pm_{\UV}$ that can be verified by straightforward
substitutions. In fact, these computations are virtually identical to those
in the proof of \cite[Lemma~4.18]{MatteMoller.2018}.

For all $\UV\in[0,\infty]$, we denote by 
\begin{align}\label{relWst}
	W_{\UV,s,s+r}(x)&=\Gamma(e_{x})W_{\UV,s,s+r}(0)\Gamma(e_{-x}),\quad r,s\ge0,\,x\in\RR^2,
\end{align} 
the $\LO(\Fock)$-valued random variables obtained 
by working on the filtered probability space $(\Omega,\fr{F},(\fr{F}_{s+r})_{r\ge0},\PP)$ and
putting $(X_{s+r}-X_s)_{r\ge0}$ in place of $X$ in \cref{defUplusminus,defuUV,def:action,def:Wt}. 

\begin{lem}[{\bf Flow relation}]\label{lem:flow00}
Let $\UV\in[0,\infty)$. Then
\begin{align}\label{floweq00}
W_{\UV,t}(0)&=W_{\UV,s,t}(X_s)W_{\UV,s}(0),\quad t\ge s\ge 0.
\end{align}
\end{lem}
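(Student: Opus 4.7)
The plan is to reduce \cref{floweq00} to a verification on a dense total set and then to three pathwise algebraic identities for $u_\UV$ and $U_\UV^\pm$. Since both sides of \cref{floweq00} are bounded operators on $\Fock$, it suffices to verify the equality on exponential vectors $\expv{h}\coloneq F_1(h)(1,0,0,\ldots)$ with $h\in L^2(\RR^2)$, which form a total set. On such vectors the action of the building blocks of $W_{\UV,t}(0)$ is explicit: combining $\eul^{\ad(k)}\expv{g}=\expv{k+g}$, $\eul^{a(k)}\expv{g}=\eul^{\langle k|g\rangle}\expv{g}$, $\Gamma(\eul^{-t\omega})\expv{g}=\expv{\eul^{-t\omega}g}$ and $\Gamma(e_x)\expv{g}=\expv{e_x g}$ with the definition \cref{def:Wt} yields, after a short calculation,
\begin{align*}
W_{\UV,t}(0)\expv{h} = \eul^{u_{\UV,t}-\langle U_{\UV,t}^-|h\rangle}\,\expv{-U_{\UV,t}^+ + \eul^{-t\omega}h}.
\end{align*}
Applying the same recipe a second time to $W_{\UV,s,t}(X_s)W_{\UV,s}(0)\expv{h}$, using \cref{relWst} and the fact that $U_{\UV,s,t}^\pm$ and $u_{\UV,s,t}$ are built from the time-shifted process $(X_{s+r}-X_s)_{r\ge 0}$, produces an analogous closed-form expression.

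Matching the arguments and the scalar prefactors of the two resulting exponential vectors then reduces the lemma to three pathwise identities:
\begin{align*}
U_{\UV,t}^- &= U_{\UV,s}^- + \eul^{-s\omega}\,e_{X_s}\,U_{\UV,s,t}^-,\\
U_{\UV,t}^+ &= \eul^{-(t-s)\omega}\,U_{\UV,s}^+ + e_{X_s}\,U_{\UV,s,t}^+,\\
u_{\UV,t} &= u_{\UV,s} + u_{\UV,s,t} + \langle U_{\UV,s}^+\,|\,e_{X_s}\,U_{\UV,s,t}^-\rangle.
\end{align*}
Each of these follows from \cref{defUplusminus,defuUV} by splitting the defining Bochner integrals at time $s$, substituting $r=s+\sigma$ in the tail piece, and using the multiplicativity $e_{X_{s+\sigma}}=e_{X_s}e_{X_{s+\sigma}-X_s}$ together with the pairwise commutativity of the multiplication operators $\eul^{-r\omega}$ and $e_y$ on $L^2(\RR^2)$, which permits $e_{X_s}$ to be factored out of the relevant integrals.

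The main point to be watchful about during the matching step is the bookkeeping of complex conjugations in inner products, since $e_x$ is complex valued and hence $\langle e_x f|g\rangle=\langle f|e_{-x}g\rangle$. In particular, the inner product appearing in the third identity is {\it a priori} complex even though $u_{\UV,t}$ is real; the $k\mapsto -k$ symmetry of $\omega$ and of the real-valued function $v_\UV$ however forces this inner product to be real, so that the exponents match cleanly. Once \cref{floweq00} is verified on exponential vectors, it extends to all of $\Fock$ by density together with the uniform operator norm bounds on $F_{t/2}$ provided by \cref{bdFt}.
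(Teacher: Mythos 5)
Your proposal is correct and follows essentially the same route as the paper, which (citing the analogous computation in \cite[Lemma~4.18]{MatteMoller.2018}) reduces \cref{floweq00} to its action on exponential vectors via \cref{forWexpv} and three substitution identities for $U^\pm_{\UV}$ and $u_{\UV}$; your three identities and the remark about realness of the cross term are exactly the content implicit in that sketch.
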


Finally, we shall need an integral equation involving $W_{\UV,t}(0)$ with finite $\UV$ applied
to a vector in the dense subset $\scr{C}$ of $\Fock$ given by
\begin{align}\label{defsscrC}
\scr{C}&\coloneq\mathrm{span}\big\{\expv{f}\in\Fock\,\big|\:f\in\dom(\omega^2)\big\}\subset\dom(\Id\Gamma(\omega)^2).
\end{align}
We shall abbreviate
\begin{align*}
h_{\UV}(x)&\coloneq \Id\Gamma(\omega)+\vp(e_{x}v_{\UV})+E_{\UV}^{\ren},\quad x\in\RR^2,
\end{align*}
so that 
\begin{align}\label{whHxih0}
\wh{H}_{\UV}(\xi)=\psi(\xi-\Id\Gamma(K))+h_{\UV}(0),\quad \xi\in\RR^2,\,\UV\in[0,\infty).
\end{align}

\begin{lem}[{\bf Integral equation}]\label{lem:inteqW0}
Let $\UV\in[0,\infty)$ and $\phi\in\scr{C}$. Then,
at every fixed elementary event,
$W_{\UV,s}(0)\phi\in\dom(\Id\Gamma(\omega))$ for all $s\ge0$, the path
$[0,\infty)\ni s\mapsto h_{\UV}(X_s)W_{\UV,s}(0)\phi\in\Fock$ is c\`{a}dl\`{a}g and
\begin{align}\label{IEQW}
W_{\UV,t}(0)\phi-\phi
&=
-\int_0^th_{\UV}(X_s)W_{\UV,s}(0)\phi\Id s,\quad t\ge0.
\end{align}
\end{lem}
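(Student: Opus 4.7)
The plan is to exploit the explicit action of $W_{\UV,t}(0)$ on exponential vectors. By linearity and the definition \cref{defsscrC} of $\scr{C}$, it suffices to consider $\phi=\expv{f}$ with $f\in\dom(\omega^2)$. Using the identities
\begin{equation*}
F_s(h)\expv{g}=\expv{h+\eul^{-s\omega}g},\qquad F_s(h)^*\expv{g}=\eul^{\langle h|g\rangle}\expv{\eul^{-s\omega}g},
\end{equation*}
which follow from the defining power series for $F_s$ together with $\Gamma(\eul^{-s\omega})\expv{g}=\expv{\eul^{-s\omega}g}$ and the annihilation rule $a(h)\expv{g}=\langle h|g\rangle\expv{g}$, one computes
\begin{equation*}
W_{\UV,t}(0)\expv{f}=\eul^{\alpha_t}\expv{\chi_t},\quad\chi_t\coloneq\eul^{-t\omega}f-U_{\UV,t}^+,\quad\alpha_t\coloneq u_{\UV,t}-\langle U_{\UV,t}^-|f\rangle.
\end{equation*}
Since the ultraviolet cutoff forces $\chi_t\in\dom(\omega^2)$ at each elementary event and every $t\ge 0$, we get $\expv{\chi_t}\in\dom(\Id\Gamma(\omega)^2)$, hence $W_{\UV,t}(0)\phi\in\dom(\Id\Gamma(\omega))$.

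For the c\`{a}dl\`{a}g property, I would observe that the $L^2(\RR^2)$-valued processes $U^{\pm}_{\UV}$ and the scalar process $u_{\UV}$ have continuous paths, so $s\mapsto\eul^{\alpha_s}\expv{\chi_s}$ is continuous into $\dom(\Id\Gamma(\omega)^{1/2})$; combined with the c\`{a}dl\`{a}g paths of $X$ and the norm continuity of $x\mapsto\vp(e_xv_{\UV})\in\LO(\dom(\Id\Gamma(\omega)^{1/2}),\Fock)$ (a consequence of \cref{rbvp}), this yields the c\`{a}dl\`{a}g property of $s\mapsto h_{\UV}(X_s)W_{\UV,s}(0)\expv{f}$. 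To derive \cref{IEQW}, I would differentiate $t\mapsto\eul^{\alpha_t}\expv{\chi_t}$ pathwise on the complement of the (at most countable) jump set of $X$. From \cref{IEUplus}, \cref{defUplusminus} and \cref{defuUV} one obtains
\begin{equation*}
\dot\chi_t=-\omega\chi_t-e_{X_t}v_{\UV},\qquad\dot\alpha_t=-\langle e_{X_t}v_{\UV}|\chi_t\rangle-E_{\UV}^{\ren},
\end{equation*}
where the simplification of $\dot\alpha_t$ uses that $\langle U_{\UV,t}^+|e_{X_t}v_{\UV}\rangle$ is real by a $k\mapsto-k$ symmetry (evenness of $v_{\UV}$ and $\omega$). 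Invoking the exponential-vector formulas $\tfrac{d}{dt}\expv{g_t}=\ad(\dot g_t)\expv{g_t}$, $\Id\Gamma(\omega)\expv{g}=\ad(\omega g)\expv{g}$ and $a(h)\expv{g}=\langle h|g\rangle\expv{g}$, every term collapses to
\begin{equation*}
\tfrac{d}{dt}\bigl(\eul^{\alpha_t}\expv{\chi_t}\bigr)=-(\Id\Gamma(\omega)+\vp(e_{X_t}v_{\UV})+E_{\UV}^{\ren})\eul^{\alpha_t}\expv{\chi_t}=-h_{\UV}(X_t)W_{\UV,t}(0)\expv{f}.
\end{equation*}
The right hand side is c\`{a}dl\`{a}g and locally bounded, and the jump set of $X$ is Lebesgue-null, so integrating over $[0,t]$ gives \cref{IEQW} for $\phi=\expv{f}$; the general case then follows by linearity.

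The main obstacle will be the bookkeeping in the differentiation step, where one must verify that the annihilation-operator contribution $\langle e_{X_t}v_{\UV}|\chi_t\rangle\expv{\chi_t}$ from $\vp(e_{X_t}v_{\UV})\expv{\chi_t}$ exactly cancels the scalar term produced by $\dot\alpha_t$, while the creation-operator parts from $\ad(\dot\chi_t)\expv{\chi_t}$ regroup into $\ad(\omega\chi_t+e_{X_t}v_{\UV})\expv{\chi_t}=(\Id\Gamma(\omega)+\ad(e_{X_t}v_{\UV}))\expv{\chi_t}$. The reality of the relevant inner product noted above is essential for this cancellation, and minor care is also needed in justifying term-by-term $L^2$-differentiation of the exponential series defining $\expv{\chi_t}$, which rests on $\chi_t,\dot\chi_t\in L^2(\RR^2)$ and local boundedness of $\|\chi_t\|$.
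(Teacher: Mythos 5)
Your proof is correct and takes essentially the same route as the paper's: both start from the explicit formula $W_{\UV,t}(0)\expv{f}=\eul^{u_{\UV,t}-\langle U_{\UV,t}^-|f\rangle}\expv{\eul^{-t\omega}f-U^+_{\UV,t}}$ (which is \cref{forWexpv}) and then differentiate using the integral equation \cref{IEUplus}. The only minor difference in flavor is that the paper pairs \cref{forWexpv} against a test exponential vector $\expv{f_1}$ with $f_1\in\dom(\omega)$ and checks a scalar identity, whereas you differentiate the $\Fock$-valued map $t\mapsto\eul^{\alpha_t}\expv{\chi_t}$ directly; the weak version sidesteps the term-by-term $L^2$-differentiation of the exponential series which you correctly flag as needing care, but your reality observation for $\langle U^+_{\UV,t}|e_{X_t}v_\UV\rangle$ and the cancellation structure are exactly the ones underlying the paper's scalar computation.
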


\begin{proof}[Sketch of the proof of \cref{lem:inteqW0}.]
The first two statements hold in view of
\begin{align}\label{forWexpv}
W_{\UV,t}(0)\expv{f}&=\eul^{u_{\UV,t}-\langle U_{\UV,t}^-|f\rangle}\expv{\eul^{-t\omega}f-U^+_{\UV,t}},
\quad t\ge0,\,f\in\dom(\omega^2).
\end{align} 
Moreover, after scalar-multiplying \cref{forWexpv}
with an exponential vector $\expv{f_1}$ with $f_1\in\dom(\omega)$, the proof of \cref{IEQW} is
reduced to a straightforward computation making use of \cref{IEUplus};
see \cite[Lemma~4.2]{HinrichsMatte.2023} for details.
\end{proof}

\section{Feynman--Kac integrands and semigroups for fixed total momentum}\label{sec:FKint}

\noindent
In the whole \cref{sec:FKint} we fix $\UV\in[0,\infty]$. 
We shall discuss the Feynman--Kac integrands given by
\begin{align}\label{defwhWUV}
\wh{W}_{\UV,t}(\xi)&\coloneq \eul^{-\ii\xi\cdot X_t}\Gamma(e_{-X_t})W_{\UV,t}(0),\quad t>0,\,\xi\in\RR^2,
\end{align}
and $\wh{W}_{\UV,0}(\xi)\coloneq\id_{\Fock}$, as well as associated semigroups.

\begin{rem}\label{rem:whWmeascont}
For every $t>0$, we have the alternative formulas
\begin{align*}
\wh{W}_{\UV,t}(0)&=\eul^{u_{\UV,t}}F_{t/2}(-e_{-X_t}U^+_{\UV,t})\Gamma(e_{-X_t})F_{t/2}(-U^-_{\UV,t})^*
\\
&=\eul^{u_{\UV,t}}F_{t/3}(-e_{-X_t}U^+_{\UV,t})\eul^{\ii K\cdot X_t-t\Id\Gamma(\omega)/3}F_{t/3}(-U^-_{\UV,t})^*.
\end{align*}
We further know from \cite{GueneysuMatteMoller.2017} that the map $(0,\infty)\times L^2(\RR^2)\ni(s,h)\mapsto F_s(h)\in\LO(\Fock)$
is continuous, and since $|k|\le\omega(k)$, $k\in\RR^2$, the map
$(0,\infty)\times\RR^2\ni(s, x)\mapsto \eul^{\ii K\cdot x-s\Id\Gamma(\omega)/3}\in\LO(\Fock)$
is continuous as well. In conjunction with the separability of $L^2(\RR^2)$ as well as the
adaptedness and path regularity properties of $u_{\UV}$ and $U^\pm_{\UV}$ 
these remarks reveal the following for every $\xi\in\RR^2$:
\begin{enumerate}
\item[(a)] For every fixed $t\ge0$, $\wh{W}_{\UV,t}(\xi)$ is an $\fr{F}_t$-measurable and separably valued
$\LO(\Fock)$-valued random variable.
\item[(b)] At every fixed elementary event, the map $t\mapsto \wh{W}_{\UV,t}(\xi)\in\LO(\Fock)$ 
is right-continuous on $(0,\infty)$ and it has left limits at every point of $(0,\infty)$ that we
denote by $\wh{W}_{\UV,t-}(\xi)$, $t>0$.
\end{enumerate}
\end{rem}

\cref{lem:expmb} is the main ingredient for the next statement.
\begin{prop}[{\bf Moment bounds and convergence}]\label{prop:mb}
Let $p\in[1,\infty)$.
Then there exists some $c_p\in(0,\infty)$, solely depending on $p$ and the model parameters $m_{\p}$,
$m_{\bos}$ and $g$, such that
\begin{align}\label{mb}
\sup_{\xi\in\RR^2}\EE\Big[\sup_{s\in[0,t]}\|\wh{W}_{\UV,s}(\xi)\|^p\Big]
=\EE\Big[\sup_{s\in[0,t]}\|W_{\UV,s}(0)\|^p\Big]&\le \eul^{c_p(1+t)},\quad t\ge0.
\end{align}
Furthermore, 
\begin{align}\label{convW1}
\sup_{\xi\in\RR^2}
\EE\Big[\sup_{s\in[0,t]}\|\wh{W}_{\UV,s}(\xi)-\wh{W}_{\infty,s}(\xi)\|^p\Big]\xrightarrow{\;\;\UV\to\infty\;\;}0,\quad t\ge0.
\end{align}
\end{prop}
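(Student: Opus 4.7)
The plan is to reduce both assertions to $\xi$-independent estimates on $W_{\UV,s}(0)$. Indeed, since $|\eul^{-\ii\xi\cdot X_s}|=1$ and $\Gamma(e_{-X_s})$ is unitary on $\Fock$, the definition \cref{defwhWUV} gives
\begin{align*}
\|\wh{W}_{\UV,s}(\xi)\|=\|W_{\UV,s}(0)\|\quad\text{and}\quad\|\wh{W}_{\UV,s}(\xi)-\wh{W}_{\infty,s}(\xi)\|=\|W_{\UV,s}(0)-W_{\infty,s}(0)\|
\end{align*}
for every $\xi\in\RR^2$, $s\ge 0$ and $\UV\in[0,\infty]$. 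Hence the equality in \cref{mb} is immediate and the supremum over $\xi$ in \cref{convW1} collapses; measurability of the relevant suprema over $s\in[0,t]$ follows from the c\`{a}dl\`{a}g property recorded in \cref{rem:whWmeascont}(b).

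For the moment bound I would factor $W_{\UV,s}(0)=\eul^{u_{\UV,s}}F_{s/2}(-U_{\UV,s}^+)F_{s/2}(-U_{\UV,s}^-)^*$ as in \cref{def:Wt} and control each $F_{s/2}$-factor via \cref{bdFt}. The pointwise estimate $|U_{\UV,s}^\pm(k)|\le(1-\eul^{-s\omega(k)})|v(k)|/\omega(k)$, together with $|v|^2=g^2/\omega$ and the elementary inequalities $(1-\eul^{-x})^2\le 1$ and $(1-\eul^{-x})^2\le x$, yields
\begin{align*}
\|U_{\UV,s}^\pm\|^2+\tfrac{2}{s}\|\omega^{-1/2}U_{\UV,s}^\pm\|^2\le 3g^2\int_{\RR^2}\omega^{-3}\Id k=:C_1^2,
\end{align*}
and this is finite because $\omega^{-3}\sim|k|^{-3}$ at infinity in two dimensions and $\omega\ge m_\bos>0$ elsewhere. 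Hence $\|U_{\UV,s}^\pm\|_{s/2}\le C_1$ uniformly in $s\in[0,t]$ and $\UV\in[0,\infty]$, so that $\|W_{\UV,s}(0)\|\le\mc{S}(C_1)^2\eul^{u_{\UV,s}}$. Taking $\sup_{s\in[0,t]}$, the $p$-th power, expectation, and applying \cref{lem:expmb} gives \cref{mb}.

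For the convergence I would use the telescoping
\begin{align*}
W_{\UV,s}(0)-W_{\infty,s}(0)&=(\eul^{u_{\UV,s}}-\eul^{u_{\infty,s}})F_{s/2}(-U_{\UV,s}^+)F_{s/2}(-U_{\UV,s}^-)^*\\
&\quad+\eul^{u_{\infty,s}}\bigl(F_{s/2}(-U_{\UV,s}^+)-F_{s/2}(-U_{\infty,s}^+)\bigr)F_{s/2}(-U_{\UV,s}^-)^*\\
&\quad+\eul^{u_{\infty,s}}F_{s/2}(-U_{\infty,s}^+)\bigl(F_{s/2}(-U_{\UV,s}^-)-F_{s/2}(-U_{\infty,s}^-)\bigr)^*,
\end{align*}
and invoke the fundamental theorem of calculus along the segment between $-U_{\UV,s}^\pm$ and $-U_{\infty,s}^\pm$ together with the derivative bound in \cref{bdFt} to obtain $\|F_{s/2}(-U_{\UV,s}^\pm)-F_{s/2}(-U_{\infty,s}^\pm)\|\le 4\mc{S}(C_1)\|U_{\UV,s}^\pm-U_{\infty,s}^\pm\|_{s/2}$. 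Since the integrand defining the difference $U_{\UV,s}^\pm-U_{\infty,s}^\pm$ is supported in $\{|k|>\UV\}$, the same computation as in the previous paragraph restricted to that set yields the deterministic bound $\sup_{s\in[0,t]}\|U_{\UV,s}^\pm-U_{\infty,s}^\pm\|_{s/2}^2\le 3g^2\int_{\{|k|>\UV\}}\omega^{-3}\Id k$, which tends to zero as $\UV\to\infty$ by dominated convergence. Combining these ingredients with H\"{o}lder's inequality, the uniform bound from the previous paragraph, and \cref{lem:expmb} (used both for the convergence $\EE[\sup_s|\eul^{u_{\UV,s}}-\eul^{u_{\infty,s}}|^p]\to 0$ and for the uniform exponential moment of $\sup_s\eul^{u_{\infty,s}}$) delivers \cref{convW1}. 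The subtlety to watch out for is that the norm $\|\cdot\|_{s/2}$ contains the singular factor $s^{-1/2}$; this is tamed by the decay of $(1-\eul^{-s\omega})^2$ in the integrand, which is precisely what the inequality $(1-\eul^{-x})^2\le x$ exploits.
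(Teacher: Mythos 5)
Your argument is correct and follows the same route as the paper: reduce to $\xi$-independent statements via unitarity of $\Gamma(e_{-X_s})$, bound $\|U^{\pm}_{\UV,s}\|_{s/2}$ by the deterministic constant (your $C_1^2=6\pi g^2/m_\bos$ matches the paper's bound $\|\chi_{B_\UV\setminus B_\sigma}U^\pm_{\infty,t}\|_{t/2}^2\le 6\pi g^2(\sigma^2+m_\bos^2)^{-1/2}$ at $\sigma=0$, and its tail version gives the convergence ingredient), and combine with \cref{bdFt} and \cref{lem:expmb}. The paper merely states these inputs without the telescoping or the $(1-\eul^{-x})^2\le\min\{1,x\}$ computation, so you have simply filled in the details of the same proof.
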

\begin{rem}
	The norms under the expectations in \cref{mb,convW1} actually are $\xi$-independent.
\end{rem}
\begin{proof}
Since $\Gamma(e_x)$ is unitary for every $x\in\RR^2$, the first relation in \cref{mb} is obvious from \cref{defwhWUV}.
The second relation in \cref{mb} and \cref{convW1} are implied by \cref{bdFt,lem:expmb} and the elementary bound
$\|\chi_{B_{\UV}\setminus B_\sigma}U_{\infty,t}^\pm\|_{t/2}^2\le 6\pi g^2(\sigma^2+m_{\bos}^2)^{-1/2}$ valid for all
$0\le\sigma<\UV\le\infty$.
\end{proof}

We also have an analogue of \cref{lem:flow00} for fixed total momentum.
\begin{prop}[{\bf Flow equation}]\label{propflow}
Let $\xi\in\RR^2$ and put
\begin{align*}
\wh{W}_{\UV,s,t}(\xi)&\coloneq \eul^{-\ii\xi\cdot (X_{t}-X_s)}\Gamma(e_{-(X_t-X_s)})W_{\UV,s,t}(0),
\quad t\ge s.
\end{align*}
Then, $\PP$-a.s.,
\begin{align}\label{floweq}
\wh{W}_{\UV,t}(\xi)&=\wh{W}_{\UV,s,t}(\xi)\wh{W}_{\UV,s}(\xi),\quad t\ge s.
\end{align}
\end{prop}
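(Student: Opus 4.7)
The plan is to reduce the claimed identity to Lemma~\ref{lem:flow00} by straightforward algebraic manipulations involving only the multiplicativity of second quantization and the fact that the phase factor $\eul^{-\ii\xi\cdot x}$ is a scalar. Since $\wh{W}_{\UV,t}(\xi)$ is obtained from $W_{\UV,t}(0)$ by multiplication with scalar phases and the unitary operators $\Gamma(e_{-X_t})$, the flow relation \cref{floweq00} should carry over after appropriately redistributing these prefactors.

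More concretely, I would proceed as follows. Starting from the right-hand side of \cref{floweq}, I would first split the scalar phase as $\eul^{-\ii\xi\cdot(X_t-X_s)}\eul^{-\ii\xi\cdot X_s}=\eul^{-\ii\xi\cdot X_t}$ and pull this overall phase outside. Next, using $e_a e_b=e_{a+b}$ on $L^2(\RR^2)$ together with the identity $\Gamma(V_1)\Gamma(V_2)=\Gamma(V_1V_2)$ mentioned in the introduction, I would combine the two middle factors via
\begin{align*}
\Gamma(e_{-(X_t-X_s)})W_{\UV,s,t}(0)\Gamma(e_{-X_s})
&=\Gamma(e_{-X_t})\Gamma(e_{X_s})W_{\UV,s,t}(0)\Gamma(e_{-X_s})
\\
&=\Gamma(e_{-X_t})W_{\UV,s,t}(X_s),
\end{align*}
where in the last step I invoke the intertwining relation \cref{relWst}. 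Applying Lemma~\ref{lem:flow00} then collapses $W_{\UV,s,t}(X_s)W_{\UV,s}(0)$ into $W_{\UV,t}(0)$, and reassembling the phase and $\Gamma(e_{-X_t})$ yields precisely $\wh{W}_{\UV,t}(\xi)$ as defined in \cref{defwhWUV}. The full chain of identities holds $\PP$-a.s.\ on the event of full measure where \cref{floweq00} is valid, and for all $t\ge s$ simultaneously since the identity \cref{floweq00} is itself asserted there for all such pairs.

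There is no real obstacle here beyond bookkeeping: the proof is essentially a one-line substitution of Lemma~\ref{lem:flow00} sandwiched between the phase $\eul^{-\ii\xi\cdot X_t}$ and $\Gamma(e_{-X_t})$, with the intertwining \cref{relWst} used to shift the $X_s$-translation from the inner $W$-factor to the outer $\Gamma$-factors. The mild care required is only to verify that the two factorizations $e_{-X_t}=e_{-(X_t-X_s)}\cdot e_{X_s}\cdot e_{-X_s}\cdot\ldots$ line up correctly and that no measurability issues arise, but both points are immediate from \cref{rem:whWmeascont} and the definition of $\wh{W}_{\UV,s,t}(\xi)$.
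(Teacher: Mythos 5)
Your algebraic reduction for finite $\UV$ is correct and is exactly the bookkeeping the paper has in mind when it says that for $\UV\in[0,\infty)$ the flow relation \cref{floweq} is equivalent to \cref{floweq00} via \cref{defwhWUV} and \cref{relWst}; the splitting $e_{-(X_t-X_s)}=e_{-X_t}e_{X_s}$, multiplicativity of $\Gamma$, and the intertwining \cref{relWst} do line up as you describe.

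However, there is a genuine gap: you have only proved the statement for $\UV\in[0,\infty)$. Recall that \cref{sec:FKint} fixes $\UV\in[0,\infty]$ at the outset, so \cref{propflow} is claimed for $\UV=\infty$ as well, whereas your key input, \cref{lem:flow00}, is stated and proved only for finite $\UV$ (its proof goes through exponential vectors and the elementary integral identities for $u_{\UV}$, $U^\pm_{\UV}$, which require the cutoff). You implicitly acknowledge this by saying the identity holds ``on the event of full measure where \cref{floweq00} is valid,'' but such an event is only available for finite $\UV$. The missing step is a limiting argument: one must pass from finite $\UV$ to $\UV=\infty$ using the convergence \cref{convW1} for $(\wh{W}_{\UV,s}(\xi))_{s\ge0}$ together with its analogue for the time-shifted family $(\wh{W}_{\UV,s,s+r}(\xi))_{r\ge0}$, e.g.\ by extracting an a.s.\ convergent subsequence (uniformly on compact time intervals) from the $L^p(\PP)$-convergence and passing to the limit in \cref{floweq} on a set of full measure. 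Without this, the flow equation for the renormalized ($\UV=\infty$) Feynman--Kac integrands remains unproved.
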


\begin{proof}
For finite $\UV$, \cref{floweq} is equivalent to \cref{floweq00} in view of \cref{defwhWUV,relWst}.
By virtue of \cref{convW1} and its analogue for $(\wh{W}_{\UV,s,s+r}(\xi))_{r\ge0}$, 
\cref{floweq} extends to $\UV=\infty$.
\end{proof}

In view of item (a) in \cref{rem:whWmeascont} as well as \cref{prop:mb}
the following $\LO(\Fock)$-valued expectations are well-defined:
\begin{align}\label{defwhT}
\wh{T}_{\UV,t}(\xi)&\coloneq\EE[\wh{W}_{\UV,t}(\xi)^*],\quad t\ge0,\,\xi\in\RR^2.
\end{align}

\begin{prop}[{\bf Norm bound and convergence}]
With $c_1$ denoting the $\UV$-independent constant appearing in \cref{prop:mb}, we  have
\begin{align}\label{opnbwhT}
\sup_{\xi\in\RR^2}\|\wh{T}_{\UV,t}(\xi)\|&\le \eul^{c_1(1+t)},\quad t\ge0.
\end{align}
Furthermore,
\begin{align}\label{univconvwhT}
\sup_{\xi\in\RR^2}\sup_{s\in[0,t]}\|\wh{T}_{\UV,s}(\xi)-\wh{T}_{\infty,s}(\xi)\|\xrightarrow{\;\;\UV\to\infty\;\;}0,
\quad t\ge0.
\end{align}
\end{prop}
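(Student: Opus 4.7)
The statement is essentially a direct corollary of \cref{prop:mb} once one notices that the map $\omega\mapsto\wh{W}_{\UV,t}(\xi)^*$ is Bochner integrable. Concretely, item~(a) of \cref{rem:whWmeascont} yields that $\wh{W}_{\UV,t}(\xi)$ (hence also its adjoint) is an $\fr{F}_t$-measurable and separably valued $\LO(\Fock)$-valued random variable, while the $p=1$ case of \cref{mb} supplies the integrability bound $\EE[\|\wh{W}_{\UV,t}(\xi)^*\|]=\EE[\|\wh{W}_{\UV,t}(\xi)\|]\le\eul^{c_1(1+t)}$. Hence \cref{defwhT} defines $\wh{T}_{\UV,t}(\xi)\in\LO(\Fock)$ as a strong (Bochner) integral with no further hypothesis needed.

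For the norm bound \cref{opnbwhT}, I would use the standard estimate $\|\wh{T}_{\UV,t}(\xi)\|=\|\EE[\wh{W}_{\UV,t}(\xi)^*]\|\le\EE[\|\wh{W}_{\UV,t}(\xi)\|]$ for operator-valued Bochner integrals, bound this by $\EE[\sup_{s\in[0,t]}\|\wh{W}_{\UV,s}(\xi)\|]$, and apply \cref{mb} with $p=1$. The resulting bound $\eul^{c_1(1+t)}$ is independent of $\xi$ and $\UV$, so taking $\sup_{\xi\in\RR^2}$ is immediate.

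For the uniform convergence \cref{univconvwhT}, the same principle applies to the difference: for any $\xi\in\RR^2$ and $s\in[0,t]$,
\begin{align*}
\|\wh{T}_{\UV,s}(\xi)-\wh{T}_{\infty,s}(\xi)\|
&=\|\EE[(\wh{W}_{\UV,s}(\xi)-\wh{W}_{\infty,s}(\xi))^*]\|\\
&\le\EE\Big[\sup_{r\in[0,t]}\|\wh{W}_{\UV,r}(\xi)-\wh{W}_{\infty,r}(\xi)\|\Big].
\end{align*}
The right-hand side is independent of $s$, so taking the $\sup_{s\in[0,t]}$ on the left changes nothing. Taking then the $\sup_{\xi\in\RR^2}$ and invoking \cref{convW1} finishes the proof.

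There is no real obstacle here; the only subtlety to watch is justifying the inequality $\|\EE[\cdot]\|\le\EE[\|\cdot\|]$, which holds for Bochner integrals once integrability is established, and verifying that the suprema over $s$ and $\xi$ may indeed be pulled across the already-uniform estimates from \cref{prop:mb}. All else is bookkeeping.
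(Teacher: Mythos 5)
Your proof is correct and takes essentially the same route as the paper, which dismisses the statement in one line as following ``manifestly'' from \cref{mb,convW1}; you have merely spelled out the standard Bochner-integral inequality $\|\EE[\cdot]\|\le\EE[\|\cdot\|]$ and the trivial suprema manipulations that the paper leaves implicit.
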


\begin{proof}
Manifestly, \cref{opnbwhT,univconvwhT} follow from \cref{mb,convW1}, respectively.
\end{proof}

We continue by deriving a Markov property involving the family $(\wh{T}_{\UV,t}(\xi))_{t\ge0}$ 
that directly will entail its semigroup property.

\begin{thm}[{\bf Markov property}]\label{propMarkov}
Let $\xi\in\RR^2$ and $t\ge s\ge0$. Then, $\PP$-a.s.,
\begin{align}\label{markov1}
\EE^{\fr{F}_s}[\wh{W}_{\UV,t}(\xi)^*]&=\wh{W}_{\UV,s}(\xi)^*\wh{T}_{\UV,t-s}(\xi).
\end{align}
\end{thm}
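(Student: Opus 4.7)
The plan is to combine the flow equation \cref{floweq} of \cref{propflow} with the independence and stationarity of the post-$s$ increments of the Lévy process $X$. Taking adjoints in \cref{floweq} gives $\wh{W}_{\UV,t}(\xi)^*=\wh{W}_{\UV,s}(\xi)^*\wh{W}_{\UV,s,t}(\xi)^*$, $\PP$-a.s., where the first factor on the right is $\fr{F}_s$-measurable (being constructed from $X$ restricted to $[0,s]$) while the second is a functional of the increment process $(X_{s+r}-X_s)_{0\le r\le t-s}$ only. By the defining properties of $X$, this increment process is independent of $\fr{F}_s$ and distributionally identical to $(X_r)_{0\le r\le t-s}$, so $\wh{W}_{\UV,s,t}(\xi)^*$ is itself independent of $\fr{F}_s$ and satisfies $\EE[\wh{W}_{\UV,s,t}(\xi)^*]=\wh{T}_{\UV,t-s}(\xi)$ by \cref{defwhT}.

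Next I would extract the asserted operator identity by testing against arbitrary $\phi,\psi\in\Fock$. Using the adjoint flow relation above, the scalar random variable $\langle\phi|\wh{W}_{\UV,t}(\xi)^*\psi\rangle$ factors as $\langle\wh{W}_{\UV,s}(\xi)\phi|\wh{W}_{\UV,s,t}(\xi)^*\psi\rangle$, with the first argument of the inner product being $\fr{F}_s$-measurable and the second being independent of $\fr{F}_s$. The classical pull-out rule for conditional expectations should then yield
\begin{align*}
\EE^{\fr{F}_s}\big[\langle\phi|\wh{W}_{\UV,t}(\xi)^*\psi\rangle\big]
&=\langle\wh{W}_{\UV,s}(\xi)\phi|\wh{T}_{\UV,t-s}(\xi)\psi\rangle
=\langle\phi|\wh{W}_{\UV,s}(\xi)^*\wh{T}_{\UV,t-s}(\xi)\psi\rangle,
\end{align*}
which is the weak form of \cref{markov1}; the strong (operator-valued) form follows by invoking separability of $\Fock$ to unify the exceptional $\PP$-null sets over a countable dense set of test vectors. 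This argument is uniform in $\UV\in[0,\infty]$ since \cref{propflow} and \cref{mb} both cover $\UV=\infty$, so no separate limiting step is needed.

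The main technical obstacle will be rigorously justifying the pull-out rule in the $\Fock$-valued setting. The cleanest route I see is to expand $\wh{W}_{\UV,s}(\xi)\phi=\sum_n u_n e_n$ and $\wh{W}_{\UV,s,t}(\xi)^*\psi=\sum_n v_n e_n$ in a fixed orthonormal basis $(e_n)_n$ of $\Fock$, so that $\langle\wh{W}_{\UV,s}(\xi)\phi|\wh{W}_{\UV,s,t}(\xi)^*\psi\rangle=\sum_n\bar{u}_nv_n$. Each coordinate $u_n$ is $\fr{F}_s$-measurable and each $v_n$ is independent of $\fr{F}_s$, so the scalar pull-out rule applies term-by-term. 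The interchange of the conditional expectation with the sum is then justified by Cauchy-Schwarz combined with the moment bound \cref{mb} applied for $p=2$, which ensures $\EE[\|\wh{W}_{\UV,s}(\xi)\phi\|\,\|\wh{W}_{\UV,s,t}(\xi)^*\psi\|]<\infty$.
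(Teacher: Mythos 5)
Your argument is correct and takes essentially the same route as the paper: take adjoints in the flow relation \cref{floweq}, observe that $\wh{W}_{\UV,s}(\xi)^*$ is $\fr{F}_s$-measurable while $\wh{W}_{\UV,s,t}(\xi)^*$ is independent of $\fr{F}_s$ and equidistributed with $\wh{W}_{\UV,t-s}(\xi)^*$, then pull the $\fr{F}_s$-measurable factor out of the conditional expectation. Your final paragraph justifying the pull-out rule in the $\LO(\Fock)$-valued setting via a basis expansion plus \cref{mb} is a sensible elaboration of a step the paper leaves implicit.
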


\begin{proof}
This follows upon taking adjoints on both sides of \cref{floweq} and observing that
$\wh{W}_{\UV,s,t}(\xi)^*$ is $\fr{F}_s$-independent
while $\wh{W}_{\UV,s}(\xi)^*$ is $\fr{F}_s$-measurable. In fact,
let $(\fr{F}^s_r)_{r\ge0}$ denote the (automatically right-continuous) completion of the
natural filtration associated with $(X_{r+s}-X_s)_{r\ge0}$.
Applying \cref{rem:whWmeascont} to that filtration and the time-shifted L\'{e}vy process,
we see that $(\wh{W}_{\UV,s,s+r}(\xi)^*)_{r\ge0}$ is adapted to $(\fr{F}^s_r)_{r\ge0}$.
Since $X$ is $(\fr{F}_r)_{r\ge0}$-L\'{e}vy, we know, however, that
each $\fr{F}^s_r$ with $r\ge0$ and $\fr{F}_s$ are independent.
To get \cref{markov1} we also exploit that $\wh{W}_{\UV,s,t}(\xi)^*$ and
$\wh{W}_{\UV,t-s}(\xi)^*$ have the same distribution.
\end{proof}

Taking expectations in \cref{markov1} with $t=r+s$ we arrive at the following result:

\begin{cor}[{\bf Semigroup property}]
For all $\xi\in\RR^2$ and $r,s\ge0$,
\begin{align*}
\wh{T}_{\UV,s+r}(\xi)&=\wh{T}_{\UV,s}(\xi)\wh{T}_{\UV,r}(\xi).
\end{align*}
\end{cor}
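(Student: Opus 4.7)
The plan is to simply take expectations on both sides of the Markov identity \cref{markov1} with $t$ replaced by $s+r$, as the corollary statement already suggests. This reduces the semigroup property to a tower-property-plus-pull-out computation, so there is essentially one non-routine point to check, namely that the deterministic factor $\wh{T}_{\UV,r}(\xi)$ may be pulled out of a Bochner expectation.

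Concretely, I would proceed as follows. First, I apply \cref{propMarkov} with $t\coloneq s+r$ to obtain, $\PP$-a.s.,
\begin{align*}
\EE^{\fr{F}_s}\bigl[\wh{W}_{\UV,s+r}(\xi)^*\bigr]
&=\wh{W}_{\UV,s}(\xi)^*\wh{T}_{\UV,r}(\xi).
\end{align*}
Next, I take the unconditional expectation on both sides. On the left, the tower property of conditional Bochner expectations yields
$\EE[\EE^{\fr{F}_s}[\wh{W}_{\UV,s+r}(\xi)^*]]=\EE[\wh{W}_{\UV,s+r}(\xi)^*]=\wh{T}_{\UV,s+r}(\xi)$,
which is well-defined by \cref{prop:mb} and the separable-valuedness observed in \cref{rem:whWmeascont}(a). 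On the right, I use that $\wh{T}_{\UV,r}(\xi)\in\LO(\Fock)$ is a fixed bounded operator independent of $\omega\in\Omega$, so it can be pulled out of the Bochner integral to give $\EE[\wh{W}_{\UV,s}(\xi)^*]\wh{T}_{\UV,r}(\xi)=\wh{T}_{\UV,s}(\xi)\wh{T}_{\UV,r}(\xi)$. Comparing the two sides yields the asserted semigroup identity.

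The only step meriting attention is the pull-out. For any bounded operator $B$ on a separable Hilbert space and any Bochner-integrable operator-valued random variable $A$ (in the sense that $A\phi$ is Bochner-integrable for every vector $\phi$, which is guaranteed here by the moment bound \cref{mb}), one has $\EE[AB]\phi=\EE[AB\phi]=\EE[A](B\phi)=(\EE[A]B)\phi$ for every $\phi\in\Fock$, so the identity $\EE[AB]=\EE[A]B$ holds in $\LO(\Fock)$. Applying this with $A=\wh{W}_{\UV,s}(\xi)^*$ and $B=\wh{T}_{\UV,r}(\xi)$ completes the argument. I do not anticipate any genuine obstacle; the entire content is already contained in \cref{propMarkov}, and only this mild technicality about Bochner integrals remains.
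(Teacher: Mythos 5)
Your proof is correct and matches the paper's argument exactly: the paper obtains the corollary by taking expectations in \cref{markov1} with $t=r+s$, which is precisely your tower-property-plus-pull-out computation. The additional justification you give for pulling the deterministic bounded operator $\wh{T}_{\UV,r}(\xi)$ out of the Bochner expectation is a harmless elaboration of a step the paper leaves implicit.
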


\section{Feynman--Kac formulas for fiber Hamiltonians}\label{sec:FKxi}

\noindent
We now turn to the derivation of the Feynman--Kac formulas for the fiber Hamiltonians.
First, we shall do this for the ultraviolet regularized operators,
by showing that the semigroup $(\wh{T}_{\UV,t}(\xi))_{t\ge0}$ is strongly continuous
and identifying $\wh{H}_{\UV}(\xi)$ as its generator. For the latter two tasks we require
the stochastic differential equations derived in the next lemma. Recall
the definition \cref{defsscrC} of $\scr{C}$.

\begin{lem}[{\bf Stochastic differential equation with cutoff}]\label{thm:SDEUV}
Let $\UV\in[0,\infty)$, $\xi\in\RR^2$ and $\phi\in\scr{C}$. Then, $\PP$-a.s.,
\begin{align*}
\wh{W}_{\UV,t}(\xi)\phi-\phi&=-\int_0^t\wh{H}_{\UV}(\xi)\wh{W}_{\UV,s}(\xi)\phi\Id s
\\
&\quad+\int_{(0,t]\times\RR^2}(\eul^{-\ii\xi\cdot z}\Gamma(e_{-z})-1)\wh{W}_{\UV,s-}(\xi)\phi\Id\wt{N}(s,z),
\quad t\ge0.
\end{align*}
\end{lem}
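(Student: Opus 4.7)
The plan is to apply a semimartingale product rule to the factorization
$$
\wh{W}_{\UV,t}(\xi)\phi = V_t\,W_{\UV,t}(0)\phi,\qquad V_t \coloneq \eul^{-\ii\xi\cdot X_t}\Gamma(e_{-X_t}),
$$
and to treat the two factors separately. The crucial observation is that, since $\Id\Gamma(K_1)$ and $\Id\Gamma(K_2)$ commute, joint functional calculus identifies $V_t=\eul^{-\ii(\xi-\Id\Gamma(K))\cdot X_t}$. Thus $V_t$ depends on time only through the L\'{e}vy process $X$ and is pure jump, while $W_{\UV,t}(0)\phi$ is absolutely continuous by \cref{lem:inteqW0}; consequently the quadratic covariation of the two factors vanishes and no It\^{o} cross-term arises.

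For the absolutely continuous part, \cref{lem:inteqW0} yields the Bochner differential $-h_{\UV}(X_s)W_{\UV,s}(0)\phi\,\Id s$. The commutation relation \cref{comrelGammavp} upgrades to $\Gamma(e_{-x})h_{\UV}(x)=h_{\UV}(0)\Gamma(e_{-x})$, so $V_s h_{\UV}(X_s)=h_{\UV}(0)V_s$, and the continuous contribution to the product rule simplifies to
$$
-\int_0^t V_s\,h_{\UV}(X_s)W_{\UV,s}(0)\phi\,\Id s = -\int_0^t h_{\UV}(0)\wh{W}_{\UV,s}(\xi)\phi\,\Id s,
$$
supplying the $h_{\UV}(0)$ summand of $\wh{H}_{\UV}(\xi)$ from \cref{whHxih0}.

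For the jump part, I would apply It\^{o}'s formula for the pure-jump L\'{e}vy process $X$ (characteristics $(0,0,\nu)$, symbol $-\psi$) to the smooth $\Fock$-valued function $x\mapsto V(x)\phi_0\coloneq \eul^{-\ii(\xi-\Id\Gamma(K))\cdot x}\phi_0$ for frozen $\phi_0$ in a suitable joint domain of $\xi-\Id\Gamma(K)$. Extending the L\'{e}vy-Khintchine identity
$$
\psi(\zeta) = -\int_{\RR^2}\bigl(\eul^{-\ii\zeta\cdot z}-1+\mathbf{1}_{|z|\le 1}\ii\zeta\cdot z\bigr)\,\Id\nu(z)
$$
through that functional calculus to $\zeta=\xi-\Id\Gamma(K)$ produces
$$
V_t\phi_0-\phi_0 = \int_{(0,t]\times\RR^2}\bigl(\eul^{-\ii\xi\cdot z}\Gamma(e_{-z})-1\bigr)V_{s-}\phi_0\,\Id\wt{N}(s,z) -\int_0^t\psi(\xi-\Id\Gamma(K))V_s\phi_0\,\Id s.
$$
Continuity of $W_{\UV,s}(0)\phi$ gives $V_{s-}W_{\UV,s}(0)\phi=\wh{W}_{\UV,s-}(\xi)\phi$, so upon adding the two contributions and invoking $\wh{H}_{\UV}(\xi)=\psi(\xi-\Id\Gamma(K))+h_{\UV}(0)$ the claimed SDE emerges.

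The main obstacle is the rigorous operator-theoretic bookkeeping: path-wise membership $\wh{W}_{\UV,s}(\xi)\phi\in\dom(\wh{H}_{\UV}(\xi))=\dom(\Id\Gamma(\omega))$, so that $\wh{H}_{\UV}(\xi)\wh{W}_{\UV,s}(\xi)\phi$ is Bochner integrable on compact time intervals, and the $L^2(\wt{N})$-integrability of the jump integrand. Both reduce to explicit checks on exponential vectors via \cref{forWexpv}: one has
$$
\wh{W}_{\UV,s}(\xi)\expv{f}=\eul^{u_{\UV,s}-\ii\xi\cdot X_s-\langle U_{\UV,s}^-|f\rangle}\expv{e_{-X_s}(\eul^{-s\omega}f-U_{\UV,s}^+)},
$$
which lies in $\dom(\Id\Gamma(\omega))$ path-wise whenever $f\in\dom(\omega)$, since the ultraviolet cutoff confines $U_{\UV,s}^+$ to $B_\UV$. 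Integrability of the jump integrand then follows, on the exponential-vector core, from an elementary bound of the form $\|(\eul^{-\ii\xi\cdot z}\Gamma(e_{-z})-1)\psi\|\le C_\xi(1\wedge|z|)\|(1+\Id\Gamma(\omega))\psi\|$ combined with $\int(1\wedge|z|^2)\,\Id\nu(z)<\infty$ and the moment bound of \cref{prop:mb}.
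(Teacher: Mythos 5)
Your proposal matches the paper's strategy in all essentials: the same factorization $\wh{W}_{\UV,t}(\xi)\phi = V_t\, W_{\UV,t}(0)\phi$, the same It\^{o} formula for the unitary part obtained from the L\'{e}vy--Khintchine identity pushed through the joint functional calculus of $\xi-\Id\Gamma(K)$, the same integral equation of \cref{lem:inteqW0} for the absolutely continuous factor, the same vanishing of the cross-variation, and the same commutation $V_s h_{\UV}(X_s)=h_{\UV}(0)V_s$. The one presentational choice the paper makes, which dissolves the ``operator-theoretic bookkeeping'' you correctly flag as the main obstacle, is to work weakly: fix a test vector $\eta\in\dom(\Id\Gamma(\omega)^2)$, apply It\^{o}'s formula to $\eul^{\ii(\xi-\Id\Gamma(K))\cdot X_t}\eta$, and then apply It\^{o}'s product rule to the \emph{scalar} semimartingale $\langle \eul^{\ii(\xi-\Id\Gamma(K))\cdot X_t}\eta\,|\,W_{\UV,t}(0)\phi\rangle$, recovering the vector-valued SDE at the end by letting $\eta$ run through a countable dense set. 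This sidesteps the need to formulate an operator-acting-on-vector product rule for $V_t W_{\UV,t}(0)\phi$, which your ``frozen $\phi_0$'' device gestures at but does not quite pin down; your domain and integrability checks via \cref{forWexpv} and the Lipschitz bound on $\eul^{-\ii\xi\cdot z}\Gamma(e_{-z})-1$ are otherwise exactly the ones the paper performs.
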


\begin{proof}
Let $\eta\in\dom(\Id\Gamma(\omega)^2)$.
Since $\RR^2\ni z\mapsto \eul^{\ii (\xi-\Id\Gamma(K))\cdot z}\eta\in\Fock$ is twice continuously
differentiable and bounded with bounded first and second order partial derivatives, we $\PP$-a.s. have
the It\^{o} formula
\begin{align*}
&\eul^{\ii (\xi-\Id\Gamma(K))\cdot X_{t}}\eta-\eta
\\
&=-\int_0^t\psi(\xi-\Id\Gamma(K))\eul^{\ii (\xi-\Id\Gamma(K))\cdot X_{s}}\eta\Id s
\\
&\quad+\int_{(0,t]\times\RR^2}
\eul^{\ii (\xi-\Id\Gamma(K))\cdot X_{s-}}(\eul^{\ii(\xi-\Id\Gamma(K))\cdot z }-1)\eta\Id\wt{N}(s,z),\quad t\ge0.
\end{align*}
In conjunction with \cref{lem:inteqW0} and It\^{o}'s product rule for scalar products it $\PP$-a.s. implies
\begin{align*}
&\langle \eul^{\ii (\xi-\Id\Gamma(K))\cdot X_{t}}\eta|W_{\UV,t}(0)\phi\rangle-\langle\eta|\phi\rangle
\\
&=-\int_0^t\langle \eul^{\ii (\xi-\Id\Gamma(K))\cdot X_{s}}\eta|(\psi(\xi-\Id\Gamma(K))
+h_{\UV}(X_s))W_{\UV,s}(0)\phi\rangle\Id s
\\
&+\int_{(0,t]\times\RR^2}\langle
\eul^{\ii (\xi-\Id\Gamma(K))\cdot X_{s-}}(\eul^{\ii(\xi-\Id\Gamma(K))\cdot z }-1)\eta|
W_{\UV,s-}(0)\phi\rangle\Id\wt{N}(s,z),
\end{align*}
for all $t\ge0$. On account of \cref{comrelGammavp,defwhWUV} we further have
\begin{align*}
\eul^{-\ii (\xi-\Id\Gamma(K))\cdot X_{s}}h_{\UV}(X_s)W_{\UV,s}(0)\phi
&=h_{\UV}(0)\wh{W}_{\UV,s}(\xi)\phi,\quad s\ge0,
\end{align*}
as well as $\eul^{-\ii (\xi-\Id\Gamma(K))\cdot X_{s-}}W_{\UV,s-}(0)=\wh{W}_{\UV,s-}(\xi)$, $s>0$.
Since $\eta$ can be chosen from a countable dense subset,
these remarks and \cref{whHxih0} $\PP$-a.s. imply the asserted stochastic differential equation.
\end{proof}

We will also need the following bound.
In its proof we argue similarly as in the proof of \cite[Lemma~10.9]{GueneysuMatteMoller.2017}.

\begin{lem}\label{lem:sc0}
Let $\UV\in[0,\infty)$ and $\xi\in\RR^2$.
Then there exists a constant $b_{\UV}(\xi)\in(0,\infty)$, also depending on the model
parameters $m_{\p}$, $m_{\bos}$ and $g$, such that
\begin{align}\label{sc0}
\EE\big[\|(1+\Id\Gamma(\omega))^{-1}(\wh{W}_{\UV,t}(\xi)\phi-\phi)\|^2\big]
&\le b_{\UV}(\xi)t\eul^{b_{\UV}(\xi)t}\|\phi\|^2,\quad t\ge0,\,\phi\in\Fock.
\end{align}
\end{lem}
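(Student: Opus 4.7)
The plan is to apply the bounded operator $R\coloneq(1+\Id\Gamma(\omega))^{-1}$ to the stochastic differential equation of \cref{thm:SDEUV}. Since both sides of \cref{sc0} depend continuously on $\phi\in\Fock$ with respect to $\|\cdot\|$ and $\scr{C}$ is dense in $\Fock$, it suffices to establish the bound for $\phi\in\scr{C}$. Abbreviating $\Delta_\xi(z)\coloneq\eul^{-\ii\xi\cdot z}\Gamma(e_{-z})-\id_{\Fock}$ and pulling $R$ inside the Bochner and the $\wt{N}$-stochastic integral yields
\begin{align*}
R(\wh{W}_{\UV,t}(\xi)\phi-\phi)
&=-\int_0^tR\wh{H}_\UV(\xi)\wh{W}_{\UV,s}(\xi)\phi\Id s
+\int_{(0,t]\times\RR^2}R\Delta_\xi(z)\wh{W}_{\UV,s-}(\xi)\phi\Id\wt{N}(s,z),
\end{align*}
and it remains to estimate each summand in $L^2(\PP,\Fock)$.

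For the Bochner integral, I would first verify that $R\wh{H}_\UV(\xi)$ extends to a bounded operator on $\Fock$. Since $\omega,K_1,K_2$ are mutually commuting multiplication operators on $L^2(\RR^2)$, their differential second quantizations commute through joint functional calculus, so $R$ commutes with both $\psi(\xi-\Id\Gamma(K))$ and $\Id\Gamma(\omega)$. From $|\psi(\eta)|\le|\eta|$ and $|K_i|\le\omega$ one reads off $\|\psi(\xi-\Id\Gamma(K))R\|\lesssim|\xi|+1$ and $\|\Id\Gamma(\omega)R\|\le1$, while writing $R=R^{1/2}R^{1/2}$ and invoking the form bound \cref{rbvp} at $x=0$ shows via duality that $R\vp(v_\UV)$ is bounded by a $\UV$-dependent constant. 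Combining this with Jensen's inequality and \cref{prop:mb} gives
\[
\EE\Big[\big\|{\textstyle\int_0^tR\wh{H}_\UV(\xi)\wh{W}_{\UV,s}(\xi)\phi\Id s}\big\|^2\Big]\le K(\UV,\xi)^2t^2\eul^{c_2(1+t)}\|\phi\|^2.
\]

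For the stochastic integral, I would apply the It\^{o} isometry for $\wt{N}$-integrals. Using $\Gamma(e_{-z})=\eul^{-\ii z\cdot\Id\Gamma(K)}$, the fact that $R$ commutes with $\Delta_\xi(z)$, and the spectral theorem applied to $z\cdot(\xi+\Id\Gamma(K))$ together with $\|\Id\Gamma(K_i)R\|\le1$ leads to the pointwise estimate
\[
\|R\Delta_\xi(z)\chi\|^2\le\min\!\bigl(|z|^2(|\xi|+\sqrt{2})^2,\,4\bigr)\|\chi\|^2,\qquad\chi\in\Fock.
\]
The crucial additional input is that the L\'{e}vy measure of the relativistic process satisfies $\int_{\RR^2}\min(|z|^2,1)\nu(\Id z)<\infty$, which is immediate from the explicit density recalled in \cite[\textsection2.2]{HinrichsMatte.2023}. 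Combined with \cref{prop:mb} (applied to $\wh{W}_{\UV,s-}(\xi)\phi$, which coincides with $\wh{W}_{\UV,s}(\xi)\phi$ for a.e. $s$) this bounds the stochastic integral contribution by $C(\xi)t\eul^{c_2(1+t)}\|\phi\|^2$.

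Summing the two contributions and absorbing the polynomial prefactor via $t\le\eul^t$ produces a bound of the desired form $b_\UV(\xi)t\eul^{b_\UV(\xi)t}\|\phi\|^2$. I expect the main technical obstacle to be the verification that $R\wh{H}_\UV(\xi)$ is bounded; in particular, the contribution from $R\vp(v_\UV)$ requires care because $\vp(v_\UV)$ does not commute with $\Id\Gamma(\omega)$, which is why one must factor through $R^{1/2}$ and invoke the form bound \cref{rbvp} together with a duality argument.
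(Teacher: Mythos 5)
Your proposal is correct and relies on the same essential ingredients as the paper's proof — the SDE of \cref{thm:SDEUV}, the boundedness of $(1+\Id\Gamma(\omega))^{-1}\wh{H}_{\UV}(\xi)$ via \cref{rbvp}, a pointwise norm bound on $(1+\Id\Gamma(\omega))^{-1}(\eul^{-\ii\xi\cdot z}\Gamma(e_{-z})-1)$, finiteness of $\int\min\{|z|^2,1\}\Id\nu$, and the moment bounds of \cref{prop:mb} — but the technical execution differs. The paper sets $\eta_t\coloneq(1+\Id\Gamma(\omega))^{-1}(\wh{W}_{\UV,t}(\xi)\phi-\phi)$ and applies It\^{o}'s product rule for scalar products to obtain an exact pathwise identity for $\|\eta_t\|^2$, consisting of a Bochner integral, an $N$-integral, and an $\wt{N}$-martingale integral; expectations are then taken, with the martingale term vanishing and the $N$-integral handled via its compensator. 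You instead apply $(1+\Id\Gamma(\omega))^{-1}$ directly to the SDE and bound $\EE[\|\cdot\|^2]$ of each summand separately, using the triangle inequality, Jensen for the Bochner part and the It\^{o} isometry for the $\wt{N}$-part. Your route is somewhat more elementary (no product rule needed) at the modest cost of a doubling constant and a $t^2$ prefactor from Jensen rather than the paper's $t$; since the claimed bound has the flexible form $b_{\UV}(\xi)t\eul^{b_{\UV}(\xi)t}$, this is harmless, as you note. One small point worth flagging: the paper's Bochner term, being of the form $\int_0^t\Re\langle\eta_s|\cdots\rangle\Id s$, requires an \emph{a priori} fourth-moment bound on $\sup_s\|\eta_s\|$ to justify the Cauchy--Schwarz estimate, whereas your Jensen-based bound only uses second moments, which is a slight simplification. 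Your duality argument for the boundedness of $R\vp(v_{\UV})$ via $R^{1/2}$ is exactly the (unstated) content behind the paper's remark that $\theta^{-1}\wh{H}_{\UV}(\xi)$ is bounded.
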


\begin{proof}
To start with we assume that $\phi\in\scr{C}$.
Abbreviating $\theta\coloneq 1+\Id\Gamma(\omega)$ and $\eta_t\coloneq\theta^{-1}(\wh{W}_{\UV,t}(\xi)\phi-\phi)$, $t\ge0$,
we then infer from \eqref{thm:SDEUV} and It\^{o}'s product formula that, $\PP$-a.s.,
\begin{align}\nonumber
\|\eta_t\|^2&=-2\int_0^t\Re\langle\eta_s | \theta^{-1}\wh{H}_{\UV}(\xi)\wh{W}_{\UV,s}(\xi)\phi\rangle\Id s
\\\nonumber
&\quad +\int_{(0,t]\times\RR^2}\|\theta^{-1}(\eul^{-\ii\xi\cdot z}\Gamma(e_{-z})-1)\wh{W}_{\UV,s-}(\xi)\phi\|^2\Id N(s,z)
\\\label{sc0b}
&\quad +2\int_{(0,t]\times\RR^2}\Re\langle\eta_{s-}|
\theta^{-1}(\eul^{-\ii\xi\cdot z}\Gamma(e_{-z})-1)\wh{W}_{\UV,s-}(\xi)
\phi\rangle\Id\wt{N}(s,z),\quad t\ge0,
\end{align}
where $N$ is the Poisson point measure defined by the jumps of $X$. Here 
\begin{align*}
\|\theta^{-1}(\eul^{-\ii\xi\cdot z}\Gamma(e_{-z})-1)\|\le \min\{|z|,2\}(1+|\xi|),\quad z\in\RR^2,
\end{align*} 
and in view of \cref{rbvp}
we know that the operator $\theta^{-1}\wh{H}_{\UV}(\xi)$ is bounded.
We further recall \cref{mb}, from which we infer the {\em a priori} bound
$\EE[\sup_{s\in[0,t]}\|\eta_s\|^4]\le \eul^{4c(1+t)}\|\phi\|^4$, $t\ge0$, with $c\in(0,\infty)$ solely depending
on $m_{\p}$, $m_{\bos}$ and $g$. On account of these bounds, the stochastic integral in the last line
of \cref{sc0b} is a martingale starting at $0$ and in particular its expectation is $0$. Furthermore, the expectation
of the $N$-integral in the second line of \cref{sc0b} equals
\begin{align*}
\int_0^t\int_{\RR^2}\EE\big[\|\theta^{-1}(\eul^{-\ii\xi\cdot z}\Gamma(e_{-z})-1)\wh{W}_{\UV,s-}(\xi)\phi\|^2\big]\Id\nu(z)\,\Id s.
\end{align*}
Upon taking expectations on both sides of \cref{sc0b}  we thus find
\begin{align*}
\EE[\|\eta_t\|^2]&\le 2t\eul^{(c+c_2/2)(1+t)}\|\theta^{-1}\wh{H}_{\UV}(\xi)\|\|\phi\|^2
\\
&\quad +t\eul^{c_2(1+t)}(1+|\xi|)^2\int_{\RR^2}\min\{|z|,2\}^2\Id\nu(z)\,\|\phi\|^2,\quad t\ge0.
\end{align*}
Here the $\nu$-integral is finite because $\nu$ is a L\'{e}vy measure.
Finally, we invoke the dominated convergence theorem and \cref{mb} to extend \cref{sc0}
to general $\phi\in\Fock$.
\end{proof}

We can now prove the Feynman--Kac formula for the ultraviolet regularized fiber Hamiltonians.

\begin{thm}[{\bf Feynman--Kac formula with cutoff}]\label{thm:FKxiUV}
Let $\UV\in[0,\infty)$ and $\xi\in\RR^2$. Then 
\begin{align}\label{sc1}
\lim_{t\to s}\|(\wh{T}_{\UV,t}(\xi)-\wh{T}_{\UV,s}(\xi))(1+\Id\Gamma(\omega))^{-1}\|&=0,\quad s\ge0,
\end{align}
and in particular the semigroup $(\wh{T}_{\UV,t}(\xi))_{t\ge0}$ is
strongly continuous. Furthermore,
\begin{align}\label{FKxiUV}
\eul^{-t\wh{H}_{\UV}(\xi)}&=\wh{T}_{\UV,t}(\xi),
\end{align}
and in particular $\wh{T}_{\UV,t}(\xi)$ is selfadjoint for every $t\ge0$.
\end{thm}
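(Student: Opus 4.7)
Rewriting $\wh{T}_{\UV,t}(\xi)^*=\EE[\wh{W}_{\UV,t}(\xi)]$, moving the bounded selfadjoint factor $(1+\Id\Gamma(\omega))^{-1}$ into the expectation, and combining Jensen's and Cauchy--Schwarz' inequalities with \cref{sc0} yield
\[\big\|(1+\Id\Gamma(\omega))^{-1}(\wh{T}_{\UV,t}(\xi)^*-\id)\big\|\le \big(b_{\UV}(\xi)t\eul^{b_{\UV}(\xi)t}\big)^{1/2}.\]
Taking operator adjoints gives \cref{sc1} at $s=0$; for $s>0$ the semigroup property factors $\wh{T}_{\UV,t}(\xi)-\wh{T}_{\UV,s}(\xi)=\wh{T}_{\UV,s\wedge t}(\xi)(\wh{T}_{\UV,|t-s|}(\xi)-\id)$, with the first factor controlled by \cref{opnbwhT}. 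This makes $t\mapsto\wh{T}_{\UV,t}(\xi)\phi$ continuous on the dense subspace $\dom(\Id\Gamma(\omega))=\Ran((1+\Id\Gamma(\omega))^{-1})$, and combining density with the uniform bound \cref{opnbwhT} extends strong continuity to all of $\Fock$.

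\textbf{Step II (integral equation for the generator).} Fix $\eta,\phi\in\scr{C}$ and take the $\Fock$ scalar product of the stochastic differential equation in \cref{thm:SDEUV} (applied to $\phi$) with $\eta$. The vector $\wh{W}_{\UV,s}(\xi)\phi$ lies almost surely in $\dom(\wh{H}_{\UV}(\xi))=\dom(\Id\Gamma(\omega))$, by \cref{lem:inteqW0} together with the commutation of $\Gamma(e_{-X_s})$ with $\Id\Gamma(\omega)$ and $\Id\Gamma(K)$. Hence selfadjointness of $\wh{H}_{\UV}(\xi)$ rewrites the drift as $\langle\wh{H}_{\UV}(\xi)\eta|\wh{W}_{\UV,s}(\xi)\phi\rangle$, and the scalar stochastic integral obtained from the martingale part has vanishing expectation. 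Fubini applies thanks to the moment bound \cref{mb}, yielding
\[\langle\wh{T}_{\UV,t}(\xi)\eta|\phi\rangle-\langle\eta|\phi\rangle=-\int_0^t\langle\wh{T}_{\UV,s}(\xi)\wh{H}_{\UV}(\xi)\eta|\phi\rangle\Id s.\]
Density of $\scr{C}$ in $\Fock$ together with norm-continuity and local boundedness of $s\mapsto\wh{T}_{\UV,s}(\xi)\wh{H}_{\UV}(\xi)\eta$ upgrade this to the Bochner identity $\wh{T}_{\UV,t}(\xi)\eta-\eta=-\int_0^t\wh{T}_{\UV,s}(\xi)\wh{H}_{\UV}(\xi)\eta\,\Id s$. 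Differentiation at $t=0$ shows $\eta\in\dom(G)$ with $G\eta=-\wh{H}_{\UV}(\xi)\eta$, where $G$ denotes the infinitesimal generator of $(\wh{T}_{\UV,t}(\xi))_{t\ge 0}$.

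\textbf{Step III (resolvent comparison and conclusion).} The space $\scr{C}$ is a core for $\Id\Gamma(\omega)$ by the standard density of exponential vectors with $f\in\dom(\omega^2)$, and equivalence of graph norms on $\dom(\Id\Gamma(\omega))=\dom(\wh{H}_{\UV}(\xi))$ (closed graph theorem) makes $\scr{C}$ a core for $\wh{H}_{\UV}(\xi)$ as well. Choose $\lambda$ larger than both $c_1$ and $\sup\sigma(-\wh{H}_{\UV}(\xi))$. The bound \cref{opnbwhT} places $\lambda$ in $\rho(G)$ with $(\lambda-G)^{-1}=\int_0^\infty\eul^{-\lambda t}\wh{T}_{\UV,t}(\xi)\Id t$, and Laplace-transforming the Bochner identity of Step II gives $(\lambda-G)^{-1}(\lambda+\wh{H}_{\UV}(\xi))\eta=\eta$ for every $\eta\in\scr{C}$. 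Density of $(\lambda+\wh{H}_{\UV}(\xi))(\scr{C})$ in $\Fock$ together with boundedness of both resolvents forces $(\lambda-G)^{-1}=(\lambda+\wh{H}_{\UV}(\xi))^{-1}$, whence $G=-\wh{H}_{\UV}(\xi)$ and \cref{FKxiUV}; selfadjointness of $\wh{T}_{\UV,t}(\xi)$ is then inherited from $\wh{H}_{\UV}(\xi)$. The main obstacle I anticipate is the interchange of $\EE$ with the unbounded operator $\wh{H}_{\UV}(\xi)$ in Step II: the selfadjointness trick of moving $\wh{H}_{\UV}(\xi)$ onto the smooth test vector $\eta$ sidesteps it and avoids moment estimates on $\wh{H}_{\UV}(\xi)\wh{W}_{\UV,s}(\xi)\phi$ that would lie outside the technical inputs collected in \cref{sec:keys}.
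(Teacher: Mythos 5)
Your proof is correct and tracks the paper's almost point by point: \cref{lem:sc0} drives Step I, the selfadjointness trick on the SDE of \cref{thm:SDEUV} (moving $\wh{H}_{\UV}(\xi)$ onto the test vector so that no unbounded operator acts on $\wh{W}_{\UV,s}(\xi)\phi$ before expectations are taken) is exactly the paper's mechanism for identifying the generator, and a resolvent comparison concludes. The one genuine deviation is your restriction $\eta\in\scr{C}$, which obliges you in Step III to establish that $\scr{C}$ is a core for $\wh{H}_{\UV}(\xi)$ and then compare resolvents via Laplace transform. The paper instead admits any $\eta\in\dom(\wh{H}_{\UV}(\xi))=\dom(\Id\Gamma(\omega))$ in Step II---legitimate because both the move $\langle\eta|\wh{H}_{\UV}(\xi)\,\cdot\,\rangle=\langle\wh{H}_{\UV}(\xi)\eta|\,\cdot\,\rangle$ and the martingale bound $\|(\eul^{-\ii\xi\cdot z}\Gamma(e_{-z})-1)^*\eta\|\le\min\{|z|,2\}(2\|\eta\|+|\xi|\|\eta\|+\|\Id\Gamma(\omega)\eta\|)$ require nothing beyond $\eta\in\dom(\Id\Gamma(\omega))$---thereby obtaining the full inclusion $\wh{H}_{\UV}(\xi)\subset G_{\UV}(\xi)$ and closing with the second resolvent identity, with no core argument needed. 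Both routes are sound; the paper's is slightly shorter, and your anticipated obstacle (avoiding moment estimates on $\wh{H}_{\UV}(\xi)\wh{W}_{\UV,s}(\xi)\phi$) and its resolution are exactly the ones the paper uses.
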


\begin{proof}
First, we prove \cref{sc1} which together with \cref{opnbwhT} 
%\begin{align}\label{sc1b}
%\sup_{t\in[0,\tau]}\|\wh{T}_{\UV,t}(\xi)\|<\infty,\quad \tau>0,
%\end{align}
entails strong continuity. To that end it suffices to show that
\begin{align}\label{sc2}
\lim_{t\downarrow0}\|(\wh{T}_{\UV,t}(\xi)-\id_{\Fock})(1+\Id\Gamma(\omega))^{-1}\|&=0,
\end{align}
by the semigroup property and \cref{opnbwhT}.
Using Cauchy--Schwarz inequalities and applying \cref{lem:sc0} we find, however,
\begin{align*}
&\|(\wh{T}_{\UV,t}(\xi)-\id_{\Fock})(1+\Id\Gamma(\omega))^{-1}\|
\\
&= \sup_{\|\phi_1\|=\|\phi_2\|=1}
\big|\EE[\langle(1+\Id\Gamma(\omega))^{-1}(\wh{W}_{\UV,t}(\xi)\phi_1-\phi_1)|\phi_2\rangle]\big|
\\
&\le \sup_{\|\phi_1\|=1}
\EE\big[\|(1+\Id\Gamma(\omega))^{-1}(\wh{W}_{\UV,t}(\xi)\phi_1-\phi_1)\|^2\big]^{1/2}
\le (b_{\UV}(\xi)t\eul^{b_{\UV}(\xi)t})^{1/2},\quad t>0,
\end{align*}
which proves \cref{sc2}, of course.

By our results proven so far, we know that the semigroup $(\wh{T}_{\UV,t}(\xi))_{t\ge0}$ has a closed
generator, call it $G_{\UV}(\xi)$, whose spectrum is contained in the half-space
$\{z\in\CC|\,\Re[z]\ge a\}$ for some $a\in\RR$. We shall now show that $G_{\UV}(\xi)=\wh{H}_{\UV}(\xi)$,
which is equivalent to the validity of \cref{FKxiUV} for all $t\ge0$. In fact, it suffices to show the inclusion
$\wh{H}_{\UV}(\xi)\subset G_{\UV}(\xi)$, because we then can pick some $\zeta\in\CC$ belonging
to the resolvent sets of both $\wh{H}_{\UV}(\xi)$ and $G_{\UV}(\xi)$ (e.g., $\zeta=a-1+\ii$)
and apply the second resolvent identity to see that $(\wh{H}_{\UV}(\xi)-\zeta)^{-1}=(G_{\UV}(\xi)-\zeta)^{-1}$.

So let $\eta\in\dom(\wh{H}_{\UV}(\xi))=\dom(\Id\Gamma(\omega))$.
Scalar-multiplying the SDE in \cref{thm:SDEUV} with $\eta$ and
taking expectations afterwards, we find
\begin{align*}
\langle\wh{T}_{\UV,t}(\xi)\eta-\eta|\phi\rangle&=-\int_0^t\langle\wh{T}_{\UV,s}(\xi)\wh{H}_{\UV}(\xi)\eta|\phi\rangle\Id s
+\EE[M_t(\eta,\phi)],
\end{align*}
for all $t\ge0$ and $\phi\in\scr{C}$. Here the stochastic integral process given by
\begin{align*}
M_t(\eta,\phi)&\coloneq 
\int_{(0,t]\times\RR^2}\langle \eta|(\eul^{-\ii\xi\cdot z}\Gamma(e_{-z})-1)\wh{W}_{\UV,s-}(\xi)\phi\rangle\Id\wt{N}(s,z),
\quad t\ge0,
\end{align*}
is a martingale starting at $0$. This follows from \cref{mb} and the bound
\begin{align*}
\|(\eul^{-\ii\xi\cdot z}\Gamma(e_{-z})-1)^*\eta\|&\le \min\{|z|,2\}(2\|\eta\|+|\xi|\|\eta\|+\|\Id\Gamma(\omega)\eta\|),
\quad z\in\RR^2.
\end{align*}
In particular $\EE[M_t(\eta,\phi)]=0$, $t\ge0$. Since $\phi$ can be chosen
in a dense subset of $\Fock$, we deduce that
\begin{align}\label{gen1}
\frac{1}{t}(\wh{T}_{\UV,t}(\xi)\eta-\eta)=-\frac{1}{t}\int_0^t\wh{T}_{\UV,s}(\xi)\wh{H}_{\UV}(\xi)\eta\Id s,
\quad t>0,
\end{align}
with an $\Fock$-valued Bochner-Lebesgue integral on the right hand side.
The whole expression on the right hand side of \cref{gen1} converges to $-\wh{H}_{\UV}(\xi)\eta$, as $t\downarrow0$,
because $\wh{T}_{\UV,s}(\xi)\wh{H}_{\UV}(\xi)\eta\to\wh{H}_{\UV}(\xi)\eta$, as $s\downarrow0$,
by strong continuity of the semigroup. Thus, $\eta\in\dom(G_{\UV}(\xi))$
with $G_{\UV}(\xi)\eta=\wh{H}_{\UV}(\xi)\eta$.
\end{proof}

Using the convergence statements proven in \cref{sec:FKint} it is not hard to deduce our main result for the fiber Hamiltonians.

\begin{thm}[{\bf Renormalization; Feynman--Kac formula without cutoff}]\label{thm:renFKxi}
${}$
Let $\xi\in\RR^2$. Then the following holds:
\begin{enumerate}
\item[{\rm(i)}] Statement \cref{sc1} holds for $\UV=\infty$ as well. 
\item[{\rm(ii)}] $(\wh{T}_{\infty,t}(\xi))_{t\ge0}$ is a strongly continuous semigroup of selfadjoint
operators satisfying $\|\wh{T}_{\infty,t}(\xi)\|\le\eul^{c(1+t)}$ for all $t\ge0$ and some $c\in(0,\infty)$.
\item[{\rm(iii)}] Denote by $\wh{H}(\xi)$ the selfadjoint, lower semibounded generator
of $(\wh{T}_{\infty,t}(\xi))_{t\ge0}$, so that 
\begin{align}\label{FKxi}
\eul^{-t\wh{H}(\xi)}=\wh{T}_{\infty,t}(\xi)=\EE[\wh{W}_{\infty,t}(\xi)^*],\quad t\ge0.
\end{align}
Then $\wh{H}_{\UV}(\xi)$ converges in the norm resolvent sense to $\wh{H}(\xi)$ as $\UV\to\infty$.
\end{enumerate}
\end{thm}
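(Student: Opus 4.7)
The approach is to transfer every assertion from the cutoff case, already proven in \cref{thm:FKxiUV}, to the renormalized fiber operator, using the $\UV$-uniform operator norm convergence \cref{univconvwhT} as the bridge. The principal obstacle is that the modulus-of-continuity constant $b_{\UV}(\xi)$ in \cref{lem:sc0}, which underlies \cref{sc1}, is expected to diverge as $\UV \to \infty$; a naive limit of \cref{sc1} is therefore not available. The remedy is a triangle-inequality argument that absorbs the renormalization error into \cref{univconvwhT} at a large but finite $\UV$ and applies \cref{sc1} only at that same $\UV$.

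For (i), given $s \ge 0$ and $\epsilon > 0$, I would first fix a finite $\UV$ large enough, via \cref{univconvwhT}, so that
\begin{equation*}
\sup_{r \in [0, s + 1]} \|\wh{T}_{\UV, r}(\xi) - \wh{T}_{\infty, r}(\xi)\| \le \epsilon.
\end{equation*}
Since $(1 + \Id\Gamma(\omega))^{-1}$ is a contraction, the analogous bound persists in the weighted operator norm appearing in (i). Applying \cref{sc1} for this fixed $\UV$ on a suitable neighborhood of $s$ and combining via the triangle inequality proves (i).

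For (ii), the semigroup property for $\UV = \infty$ is already contained in the corollary at the end of \cref{sec:FKint}, stated for the full range $\UV \in [0, \infty]$. Each $\wh{T}_{\infty, t}(\xi)$ is selfadjoint as the operator-norm limit of the selfadjoint operators $\wh{T}_{\UV, t}(\xi)$ supplied by \cref{thm:FKxiUV}, on account of \cref{univconvwhT}. The exponential norm bound is \cref{opnbwhT} specialized to $\UV = \infty$. Strong continuity holds first on the dense subspace $\Ran((1 + \Id\Gamma(\omega))^{-1}) = \dom(\Id\Gamma(\omega))$ by (i), and extends to all of $\Fock$ by a standard density argument based on the locally uniform norm bound \cref{opnbwhT}.

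For (iii), standard semigroup theory (symmetry of the generator, immediate from $\wh{T}_{\infty, t}(\xi)^* = \wh{T}_{\infty, t}(\xi)$, combined with the Hille--Yosida resolvent bound) yields a selfadjoint, lower semibounded operator $\wh{H}(\xi)$ with $\wh{T}_{\infty, t}(\xi) = \eul^{-t \wh{H}(\xi)}$, the lower bound being the negative of the exponential growth rate from (ii). Formula \cref{FKxi} then reduces to the definition \cref{defwhT}. For norm resolvent convergence, I would use the Laplace representation
\begin{equation*}
(\wh{H}_{\UV}(\xi) + \lambda)^{-1} = \int_0^\infty \eul^{-\lambda t}\, \wh{T}_{\UV, t}(\xi)\, \Id t, \quad \UV \in [0, \infty],
\end{equation*}
valid for every $\lambda > c$ by the $\UV$-uniform lower spectral bound inherited from \cref{opnbwhT}. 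Subtracting the two versions, dominating the integrand by the integrable function $2 \eul^{c(1 + t) - \lambda t}$, and invoking dominated convergence together with \cref{univconvwhT} delivers the claimed norm convergence of the resolvents.
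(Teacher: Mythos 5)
Your proof proposal follows essentially the same route as the paper's: part (i) by a triangle-inequality combination of \cref{sc1} at a suitably large finite $\UV$ with the uniform-in-time convergence \cref{univconvwhT}; part (ii) by inheriting the semigroup property, selfadjointness and exponential norm bound from the cutoff case via operator-norm limits; and part (iii) by the Hille--Yosida theorem for symmetric $C_0$-semigroups. The only difference is cosmetic and lies in the last step of (iii): you derive norm resolvent convergence directly from the Laplace representation of the resolvent together with dominated convergence, whereas the paper simply invokes the standard equivalence between norm resolvent convergence of generators and operator-norm convergence of their semigroups at each fixed $t\ge0$; both are correct, and your version is merely a self-contained re-derivation of that equivalence specialized to the situation at hand.
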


\begin{proof}
Part (i) is a consequence of \cref{sc1} and the uniform convergence on compact time intervals
in \cref{univconvwhT}. Strong continuity of $(\wh{T}_{\infty,t}(\xi))_{t\ge0}$ follows from (i) and \cref{opnbwhT}.
Each $\wh{T}_{\infty,t}(\xi)$ with $t\ge0$ is selfadjoint since by \cref{univconvwhT} it is the norm limit as $\UV\to\infty$
of the selfadjoint operators $\wh{T}_{\UV,t}(\xi)$; recall the last statement of \cref{thm:FKxiUV}.
The norm bounds in (ii) have already been stated in \cref{opnbwhT}.
By (ii) and the Hille--Yosida theorem, an operator $\wh{H}(\xi)$ as in (iii) exists and is unique.
The norm resolvent convergence $\wh{H}_{\UV}(\xi)\to \wh{H}(\xi)$, $\UV\to\infty$, is known to be
equivalent to the norm convergence $\eul^{-t\wh{H}_{\UV}(\xi)}\to \eul^{-t\wh{H}(\xi)}$, $\UV\to\infty$,
for every $t\ge0$. The latter holds due to \cref{univconvwhT,FKxiUV,FKxi}, which proves (iii).
\end{proof}

\begin{rem}
For all $\UV\in[0,\infty]$ and $t\ge0$, the map $\RR^2\ni\xi\mapsto \wh{T}_{\UV,t}(\xi)\in\LO(\Fock)$
is continuous. This follows from \cref{defwhWUV}, \cref{defwhT}, the $\PP$-integrability of
$\|W_{\UV,t}(0)\|$ and the dominated convergence theorem for the Bochner-Lebesgue integral.
In view of \cref{FKxi} we may conclude that the familiy $(\wh{H}(\xi))_{\xi\in\RR^2}$ is strongly
resolvent measurable and in particular its direct integral is a well-defined selfadjoint operator
in $L^2(\RR^2,\Fock)$.
\end{rem}

More is true for strictly positive particle masses:

\begin{rem}
Let $\UV\in[0,\infty]$ and $t\ge0$. Assume that $m_{\p}>0$ and set $S(m_{\p})\coloneq\{z\in\CC^2|\,|\Im[z]|<m_{\p}\}$.
Then the expectations
\begin{align*}
\wh{T}_{\UV,t}(\zeta)&\coloneq \EE[\eul^{\ii \zeta\cdot X_t}\wh{W}_{\UV,t}(0)^*],\quad \zeta\in S(m_{\p}),
\end{align*}
are well-defined and extend the previously considered family $(\wh{T}_{\UV,t}(\xi))_{\xi\in\RR^2}$ to  $S(m_{\p})$. Moreover, the map $S(m_{\p})\ni\zeta\mapsto\wh{T}_{\UV,t}(\zeta)\in\LO(\Fock)$ is analytic.
This follows easily from H\"{o}lder's inequality and \cref{mb} since $\EE[\eul^{|\zeta||X_t|}]<\infty$
whenever $|\zeta|<m_{\p}$.
\end{rem}

\section{The full Hamiltonian revisited}\label{sec:renrev}

\noindent
We wish to verify that the renormalized operators $\wh{H}(\xi)$, $\xi\in\RR^2$, give rise
to a fiber decomposition of the renormalized full Hamiltonian.
In what follows $\scr{U}$ again denotes the Lee-Low-Pines tranformation of \cref{sec:LLP}.
The next corollary actually provides an independent existence proof for the norm resolvent
limit of the family $(H_{\UV})_{\UV\in[0,\infty)}$, based on the key ingredients 
collected in \cref{sec:keys}:

\begin{cor}\label{cor:nrcHUV}
As $\UV$ tends to infinity, $\scr{U}H_{\UV}\scr{U}^*$ converges in the norm resolvent sense to
$\int_{\RR^2}^\oplus \wh{H}(\xi)\Id\xi$.
\end{cor}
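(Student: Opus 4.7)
The plan is to reduce the claim to the uniform fiber-wise semigroup convergence already established in the previous sections. Two standard ingredients will be combined: the equivalence between norm resolvent convergence of a family of selfadjoint operators and norm convergence of the associated semigroups at every $t\ge0$, which applies here because \cref{opnbwhT} furnishes a $\UV$-independent lower bound (namely $-c_1$) on the spectra of all fiber operators $\wh{H}_\UV(\xi)$ and $\wh{H}(\xi)$, hence on those of $\scr{U}H_\UV\scr{U}^*$ and of the candidate limit as well; and the fact that the operator norm of a decomposable operator on $L^2(\RR^2,\Fock)$ coincides with the essential supremum of the norms of its fibers.

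First, I would invoke \cref{fibdecHUV}, the functional calculus for direct integrals, and \cref{FKxiUV} to identify
\begin{equation*}
\eul^{-t\scr{U}H_\UV\scr{U}^*} = \int_{\RR^2}^\oplus \eul^{-t\wh{H}_\UV(\xi)}\Id\xi = \int_{\RR^2}^\oplus \wh{T}_{\UV,t}(\xi)\Id\xi,\quad \UV\in[0,\infty),\,t\ge0,
\end{equation*}
and would observe that, by the remark after \cref{thm:renFKxi}, the family $(\wh{H}(\xi))_{\xi\in\RR^2}$ is strongly resolvent measurable, so that the direct integral $\int_{\RR^2}^\oplus \wh{H}(\xi)\Id\xi$ is a well-defined selfadjoint operator whose semigroup factors as $\int_{\RR^2}^\oplus \wh{T}_{\infty,t}(\xi)\Id\xi$ thanks to \cref{FKxi}.

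Next, I would subtract these two representations and use the identification of the operator norm of a decomposable operator with the essential supremum of its fiber norms. Together with the uniform convergence \cref{univconvwhT} this yields
\begin{equation*}
\bigg\|\eul^{-t\scr{U}H_\UV\scr{U}^*} - \int_{\RR^2}^\oplus \eul^{-t\wh{H}(\xi)}\Id\xi\bigg\|
\le \sup_{\xi\in\RR^2}\|\wh{T}_{\UV,t}(\xi)-\wh{T}_{\infty,t}(\xi)\| \xrightarrow{\;\;\UV\to\infty\;\;} 0,
\end{equation*}
for every $t\ge0$. Combined with the uniform lower semibound, this norm convergence of the semigroups is equivalent to the asserted norm resolvent convergence.

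There is no serious obstacle in the argument; all the analytic work has already been absorbed into the uniform-in-$\xi$ semigroup convergence \cref{univconvwhT} and the fiber Feynman--Kac formulas \cref{FKxiUV,FKxi}. The only points requiring a moment of care are the correct identification of the semigroup of a direct integral (via strong resolvent measurability of the fiber family) and the invocation of the standard norm-resolvent/norm-semigroup equivalence for uniformly lower semibounded selfadjoint families.
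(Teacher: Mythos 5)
Your proposal is correct and follows essentially the same route as the paper: reduce the claim via \cref{fibdecHUV} and the fiber Feynman--Kac formulas \cref{FKxiUV,FKxi} to norm convergence of the direct-integral semigroups, which in turn follows from the $\xi$-uniform convergence in \cref{univconvwhT}. You merely make explicit two points the paper leaves implicit, namely the uniform lower semibound needed for the norm-resolvent/norm-semigroup equivalence and the strong resolvent measurability of $(\wh{H}(\xi))_{\xi}$ needed to form the limiting direct integral.
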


\begin{proof}
On account of \cref{fibdecHUV} and the Feynman--Kac formulas \cref{FKxiUV,FKxi}, the statement is 
equivalent to the operator norm convergences
\begin{align*}
\int_{\RR^2}^\oplus \wh{T}_{\UV,t}(\xi)\Id\xi\xrightarrow{\;\;\UV\to\infty\;\;}\int_{\RR^2}^\oplus \wh{T}_{\infty,t}(\xi)\Id\xi,
\quad t>0,
\end{align*}
which follow from the $\xi$-uniform convergence in \cref{univconvwhT}.
\end{proof}

With $H$ denoting the norm resolvent limit of $(H_{\UV})_{\UV\in[0,\infty)}$ we thus arrive at:

\begin{cor}\label{thm:fibH}
$\scr{U}H\scr{U}^*=\int_{\RR^2}^\oplus\wh{H}(\xi)\Id\xi$.
\end{cor}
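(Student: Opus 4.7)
The plan is to deduce the claimed fiber decomposition by combining the two norm resolvent convergence statements already available, namely $H_{\UV}\to H$ as formulated in \cref{reslimHUV} and the convergence $\scr{U}H_{\UV}\scr{U}^*\to\int_{\RR^2}^\oplus\wh{H}(\xi)\Id\xi$ proven in \cref{cor:nrcHUV}, together with the uniqueness of norm resolvent limits. The entire argument is essentially algebraic; there is no analytic difficulty to resolve beyond what has already been invested in \cref{sec:FKxi,sec:renrev}.

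The key observation is that conjugation by the unitary Lee--Low--Pines transformation $\scr{U}$ preserves norm resolvent convergence. Indeed, for any $z$ in the intersection of the resolvent sets of $H_{\UV}$ and $H$, one has the identity $(\scr{U}H_{\UV}\scr{U}^*-z)^{-1}=\scr{U}(H_{\UV}-z)^{-1}\scr{U}^*$, and likewise with $H$ in place of $H_{\UV}$, so that the norm difference of the conjugated resolvents equals the norm difference of the original ones. Therefore \cref{reslimHUV} yields
\begin{align*}
\scr{U}H_{\UV}\scr{U}^*\xrightarrow{\;\;\UV\to\infty\;\;}\scr{U}H\scr{U}^*
\end{align*}
in the norm resolvent sense.

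Comparing this with \cref{cor:nrcHUV} we see that the sequence $(\scr{U}H_{\UV}\scr{U}^*)_{\UV\in[0,\infty)}$ admits two norm resolvent limits, namely $\scr{U}H\scr{U}^*$ and $\int_{\RR^2}^\oplus\wh{H}(\xi)\Id\xi$. Since norm resolvent limits are unique (the resolvent at a single regular point already determines a selfadjoint operator), these two selfadjoint operators on $L^2(\RR^2,\Fock)$ coincide, which is exactly the asserted identity. In summary, there is no genuine obstacle: the work has been done in constructing $\wh{H}(\xi)$ fiberwise in \cref{thm:renFKxi} and in establishing the $\xi$-uniform convergence \cref{univconvwhT} that was used in \cref{cor:nrcHUV}, and here we merely collect the consequences.
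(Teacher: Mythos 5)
Your argument is correct and is essentially the paper's own: the paper presents \cref{thm:fibH} as an immediate consequence of \cref{cor:nrcHUV} together with the definition \cref{reslimHUV} of $H$ as the norm resolvent limit of the $H_{\UV}$, and your proof merely spells out the (standard) facts that unitary conjugation preserves norm resolvent convergence and that such limits are unique.
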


Finally, we fulfill a promise we gave at the end of \cref{sec:FKH}:

\begin{proof}[Alternative proof of the Feynman--Kac formula \cref{eq:FK}.]
Let $t>0$.
We assume that $\Psi\in L^2(\RR^2,\Fock)$ has an integrable Fourier transform $\wh{\Psi}$
and set $\Phi(x)\coloneq\Gamma(e_x)\Psi(x)$, a.e. $x\in\RR^2$. In view of
\begin{align*}
\eul^{-tH_{\UV}}\Phi
&=\scr{U}^*\int_{\RR^2}^\oplus\wh{T}_{\UV,t}(\xi)\Id\xi\,\scr{U}\Phi
\end{align*}
as well as \cref{defLLP,defwhWUV,defwhT} we find
\begin{align*}
(\eul^{-tH_{\UV}}\Phi)(x)&=\Gamma(e_x)\frac{1}{2\pi}\int_{\RR^2}
\eul^{\ii \xi\cdot x}\EE\big[\eul^{\ii\xi\cdot X_t}W_{\UV,t}(0)^*\Gamma(e_{X_t})\big]\wh{\Psi}(\xi)\Id\xi
\\
&=\EE\big[\Gamma(e_x)W_{\UV,t}(0)^*\Gamma(e_{-x})\Gamma(e_{x+X_t})\Psi(x+X_t)\big]
\\
&=\EE[W_{\UV,t}(x)^*\Phi(x+X_t)],\quad \text{a.e. $x$,}
\end{align*}
where we applied the Fubini and Fourier inversion theorems in the second step
and \cref{def:Wt} in the third one. This proves \cref{eq:FK} for all $\Phi$ in a dense subset of
$L^2(\RR^2,\Fock)$. Since, by \cref{def:Wt,mb},
\begin{align*}
\int_{\RR^2}\|\EE[W_{\UV,t}(x)^*\Phi(x+X_t)]\|^2\Id x
&\le \eul^{c_2(1+t)}\int_{\RR^2}\EE[\|\Phi(x+X_t)\|^2]\Id x
\\
&=\eul^{c_2(1+t)}\|\Phi\|^2,\quad\Phi\in L^2(\RR^2,\Fock),
\end{align*}
it is clear that \cref{eq:FK} extends to all $\Phi\in L^2(\RR^2,\Fock)$ by approximation.
\end{proof}

%%%%%%%%%%%%%%%%%%%%%%%%%%%%%%%%%%%%%%%%%%%%%%%%%
%%%%%%%%%%%%%%%%%%%%%%%%%%%%%%%%%%%%%%%%%%%%%%%%%
%%%%%%%%%%%%%%%%%%%%%%%%%%%%%%%%%%%%%%%%%%%%%%%%%

\subsection*{Acknowledgements}
The authors thank the Research Institute for Mathematical Sciences in Kyoto, Kyushu University and especially Fumio Hiroshima for their support and generous hospitality during and after the RIMS Workshop {\em Mathematical aspects of quantum fields and related  topics} in January 2023.
	BH acknowledges support by the Ministry of Culture and Science of the State of North Rhine-Westphalia within the project PhoQC.

%%%%%%%%%%%%%%%%%%%%%%%%%%%%%%%%%%%%%%%%%%%%%%%%%
%%%%%%%%%%%%%%%%%%%%%%%%%%%%%%%%%%%%%%%%%%%%%%%%%
%%%%%%%%%%%%%%%%%%%%%%%%%%%%%%%%%%%%%%%%%%%%%%%%%

\bibliographystyle{halpha-abbrv}
%\bibliographystyle{plain}
%\bibliography{00lit}
%\bibliography{../../Literature/00lit}

\bibliography{\jobname}

\end{document}